\theoremstyle{plain}
\newtheorem{theorem}{Theorem}[section]
\newtheorem{lemma}[theorem]{Lemma}
\newtheorem{proposition}[theorem]{Proposition}
\theoremstyle{definition}
\newtheorem{example}[theorem]{Example}
\theoremstyle{remark}
\newtheorem{remark}[theorem]{Remark}
\newcommand{\RM}{\mathbb{R}}
\newcommand{\CM}{\mathbb{C}}
\newcommand{\HM}{\mathbb{H}}
\newcommand{\Mat}{\operatorname{Mat}}
\newcommand{\Spec}{\operatorname{Spec}}
\title{\bf The discrete-time quaternionic quantum walk and the second weighted zeta function on a graph}
\author{
{\small Norio Konno}\\
{\scriptsize Department of Applied Mathematics, 
Faculty of Engineering, 
Yokohama National University}\\
{\scriptsize Hodogaya, Yokohama 240-8501, Japan}\\
{\scriptsize e-mail: konno@ynu.ac.jp}\\
{\small Hideo Mitsuhashi}\\
{\scriptsize Faculty of Education, 
Utsunomiya University}\\
{\scriptsize Utsunomiya, Tochigi 321-8505, Japan}\\
{\scriptsize e-mail: mitsu@cc.utsunomiya-u.ac.jp}\\
{\small Iwao Sato}\\
{\scriptsize Oyama National College of Technology}\\
{\scriptsize Oyama, Tochigi 323-0806, Japan}\\
{\scriptsize e-mail: isato@oyama-ct.ac.jp}\\}
\date{\small Mathematics Subject Classifications: 60F05, 05C50, 15A15, 11R52}
\begin{document}

\maketitle

\begin{abstract}
We define the quaternionic quantum walk on a finite graph and investigate its properties. 
This walk can be considered as a natural quaternionic extension of the Grover walk on a graph. 
We explain the way to obtain all the right eigenvalues of a quaternionic matrix and a 
notable property derived from the unitarity condition for the quaternionic quantum walk. 
Our main results determine all the right eigenvalues of the quaternionic quantum
walk by using complex eigenvalues of the quaternionic weighted matrix 
which is easily derivable from the walk.  
Since our derivation is owing to a quaternionic generalization of the determinant expression of 
the second weighted zeta function, 
we explain the second weighted zeta function and 
the relationship between the walk and the second weighted zeta function. 

  \bigskip\noindent \textbf{Keywords:} Quantum walk; Ihara zeta function; quaternion; quaternionic quantum walk 
\end{abstract}

\setcounter{equation}{0}
\section{Introduction}

The discrete-time quaternionic quantum walk on a graph is a quantum process on a graph 
which is governed by a unitary matrix. 
The study of quantum walks started in earnest as quantum versions of random walks 
around the end of the last century, and quantum walks have been developed rapidly for 
more than two decades in connection with various fields such as quantum information science 
and quantum physics. 
Detailed information on quantum walks can be found in several books at present, for example, 
Manouchehri and Wang \cite{MW2013}, Portugal \cite{Port2013}, 
Konno \cite{Konno2014}. 
An important example of the quantum walk on a graph 
is the Grover walk which originates from Grover's algorithm. 
Grover's algorithm which was introduced in 
\cite{Grover1996} is a quantum search algorithm that performs quadratically 
faster than the best classical search algorithm. 
Later various researchers investigated the Grover walk and developed 
the theory of discrete-time quantum walks intensively.

Recently, Konno \cite{Konno2015} established a quaternionic extension of quantum walks. 
These are a quaternionic extension of quantum walks and 
can be viewed as quaternionic quantum dynamics. 
One of the important backgrounds of quaternionic quantum walk is quaternionic quantum mechanics. 
The origin of quaternionic quantum mechanics goes back to 
the axiomatization of quantum mechanics by Birkhoff and von Neumann in 1930s. 
After that, the subject was studied further by Finkelstein, Jauch, and Speiser, 
and more recently by Adler and others. 
One significant motivation of studying quaternionic quantum mechanics is that 
physical reality might be described by quaternionic quantum system at the fundamental level, 
and this dynamics is described asymptotically by the (ordinary) quantum field theory 
at the level of all presently known physical phenomena. 
A detailed exposition of quaternionic quantum mechanics can be found in Adler \cite{Adler1995}.

On the other hand, Zeta functions of graphs have been investigated for half a century. 
Their origin is the Ihara zeta function which was defined by Ihara \cite{Ihara1966}, 
and various extensions have appeared so far. 
Among them we focus on the second weighted zeta function of a graph. 
The second weighted zeta function which was proposed by 
Sato \cite{Sato2007} is a multi-weighted version of the Ihara zeta function, and 
has several applications in discrete-time quantum walks and quantum graphs. 
For example, the second weighted zeta function 
played essential roles in the concise proof of the spectral mapping theorem 
for the Grover walk on a graph in \cite{KS2012}.

In this paper, we define the discrete-time quaternionic quantum walk on a graph as 
a quaternionic extension of the Grover walk on a finite graph, and discuss its 
right spectrum and the relationship between the walk and the second weighted zeta 
function of a graph. Our results can be viewed as a generalization of \cite{KMS2016}.


\section{The Grover walk on a graph}

Let $G=(V(G)$, $E(G))$ be a finite connected graph with the set $V(G)$ of 
vertices and the set $E(G)$ of undirected edges $uv$ 
joining two vertices $u$ and $v$. 
We assume that $G$ is finite connected and has neither loops nor multiple edges throughout. 
For $uv \in E(G)$, we mean by an arc $(u,v)$ the directed edge from $u$ to $v$. 
Let $D(G)=\{\,(u,v),\,(v,u)\,\mid\,uv{\;\in\;}E(G)\}$ and 
$|V(G)|=n,\;|E(G)|=m,\;|D(G)|=2m$. 
For $e=(u,v){\;\in\;}D(G)$, $o(e)=u$ denotes the {\it origin} and $t(e)=v$ the {\it terminal} 
of $e$ respectively. 
Furthermore, let $e^{-1}=(v,u)$ be the {\em inverse} of $e=(u,v)$. 
The {\em degree} $d_u=\deg u = \deg {}_G \  u$ of a vertex $u$ of $G$ is the number of edges 
incident to $u$. 
A {\em path $P$ of length $\ell$} in $G$ is a sequence 
$P=(e_1, \cdots ,e_{\ell})$ of $\ell$ arcs such that $e_i \in D(G)$ and 
$t(e_i)=o(e_{i+1})$ for $i{\;\in\;}\{1,\cdots,\ell-1\}$. 
We set $o(P)=o(e_1)$ and $t(P)=t(e_{\ell})$. $|P|$ denotes the length of $P$. 
A path $P=(e_1, \cdots ,e_{\ell})$ is said to be a {\em cycle} if $t(P)=o(P)$ and 
to have a {\em backtracking} 
if $ e_{i+1} =e_i^{-1} $ for some $i(1{\leq}i{\leq}\ell-1)$. 
The {\em inverse} of a path $P=(e_1,\cdots,e_{\ell})$ is the path 
$(e_{\ell}^{-1},\cdots,e_1^{-1})$ and is denoted by $P^{-1}$. 

We give a definition of the discrete-time quantum walk on $G$. 
Let $\mathscr{H}=\oplus_{e{\in}D(G)}{\CM}|e\rangle$ be the finite dimensional Hilbert 
space spanned by arcs of $G$. 
The transition matrix ${\bf U}$ of a discrete-time quantum walk consists of 
the following two consecutive operations: 
\begin{enumerate}
\renewcommand{\labelenumi}{\rm \arabic{enumi}.}
\item
For each $u{\;\in\;}V$, we perform a unitary transformation ${\bf C}_u$ 
on the states $|f\rangle$ that satisfy $t(f)=u$. 
\item
For all $e{\;\in\;}D(G)$, we perform the shift ${\bf S}$ that is defined by 
${\bf S}|e{\rangle}=|e^{-1}{\rangle}$. 
\end{enumerate} 
The transition matrix ${\bf U}^{\rm Gro}$ of the Grover walk on $G$ is defined by setting 
the Grover's diffusion matrix as ${\bf C}_u$: 
\begin{equation*}
{\bf C}_u=\begin{pmatrix}
-1+\frac{2}{d_u} & \frac{2}{d_u} & \cdots & \frac{2}{d_u} \\
\frac{2}{d_u} & -1+\frac{2}{d_u} & \cdots & \frac{2}{d_u} \\
\cdots & \cdots & \cdots & \cdots \\
\frac{2}{d_u} & \frac{2}{d_u} & \cdots & -1+\frac{2}{d_u}
\end{pmatrix},
\end{equation*}
Then ${\bf U}^{\rm Gro}=({\bf U}^{\rm Gro}_{ef})_{e,f \in D(G)}$ is given by 
\[
{\bf U}^{\rm Gro}_{ef} =\left\{
\begin{array}{ll}
2/d_{o(e)} (=2/d_{t(f)} ) & \mbox{if $t(f)=o(e)$ and $f \neq e^{-1} $, } \\
2/d_{o(e)} -1 & \mbox{if $f= e^{-1} $, } \\
0 & \mbox{otherwise.}
\end{array}
\right. 
\]
${\bf U}^{\rm Gro}$ is called the {\it Grover matrix}.

We denote by ${\Spec}({\bf A})$ the multiset of eigenvalues of a complex square matrix 
${\bf A}$ counted with multiplicity. 
We shall give examples of Grover walks and their spectra. 

\begin{example}
$\quad G=K_3$. Then $d_{o(e)}=2$ for all $e{\in}D(G)$ and ${\bf U}^{\rm Gro}$ is given as follows: 
\[
\hspace{4cm}
{\bf U}^{\rm Gro}= 
\bordermatrix{
\!\! & e_1 & e_1^{-1} & e_2 & e_2^{-1} & e_3 & e_3^{-1} \cr
e_1\!\! & 0 & 0 & 0 & 0 & 1 & 0 \cr
e_1^{-1}\!\! & 0 & 0 & 0 & 1 & 0 & 0 \cr
e_2\!\! & 1 & 0 & 0 & 0 & 0 & 0 \cr
e_2^{-1}\!\! & 0 & 0 & 0 & 0 & 0& 1 \cr
e_3\!\! & 0 & 0 & 1 & 0 & 0 & 0 \cr
e_3^{-1}\!\! & 0 & 1 & 0 & 0 & 0 & 0 \cr
}.
\]

\vspace{-4.0cm}
\begin{figure}[htbp] 
\hspace{2cm}
  \includegraphics[width=4.0cm]{./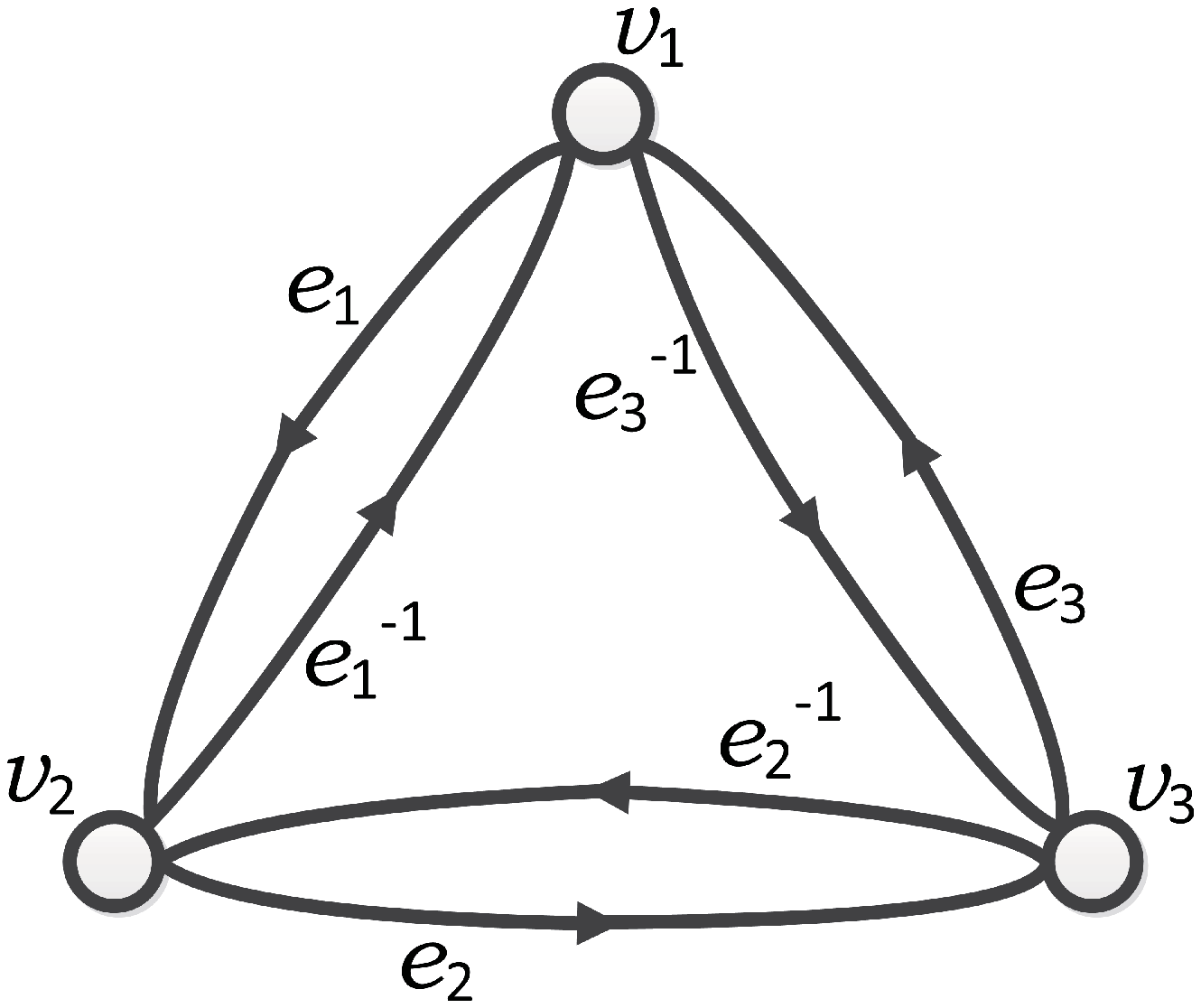}
\end{figure}

${\Spec}({\bf U}^{\rm Gro})=\Big{\{}1,1,
\frac{-1{\pm}\sqrt{3}i}{2}, \frac{-1{\pm}\sqrt{3}i}{2} \Big{\}}$.
\end{example}

\begin{example}
$\quad G=K_{1,3}$. Then $d_{o(e_1)}=d_{o(e_2)}=d_{o(e_3)}=1,\,
d_{o(e_1^{-1})}=d_{o(e_2^{-1})}=d_{o(e_3^{-1})}=3$ and ${\bf U}^{\rm Gro}$ is given as follows: 
\[
\hspace{4cm}
{\bf U}^{\rm Gro}= 
\bordermatrix{
\!\! & e_1 & e_1^{-1} & e_2 & e_2^{-1} & e_3 & e_3^{-1} \cr
e_1\!\! & 0 & 1 & 0 & 0 & 0 & 0 \cr
e_1^{-1}\!\! & -\frac{1}{3} & 0 & \frac{2}{3} & 0 & \frac{2}{3} & 0 \cr
e_2\!\! & 0 & 0 & 0 & 1 & 0 & 0 \cr
e_2^{-1}\!\! & \frac{2}{3} & 0 & -\frac{1}{3} & 0 & \frac{2}{3} & 0 \cr
e_3\!\! & 0 & 0 & 0 & 0 & 0 & 1 \cr
e_3^{-1}\!\! & \frac{2}{3} & 0 & \frac{2}{3} & 0 & -\frac{1}{3} & 0 \cr
}.
\]

\vspace{-4.0cm}
\begin{figure}[htbp] 
\hspace{2cm}
  \includegraphics[width=4.0cm]{./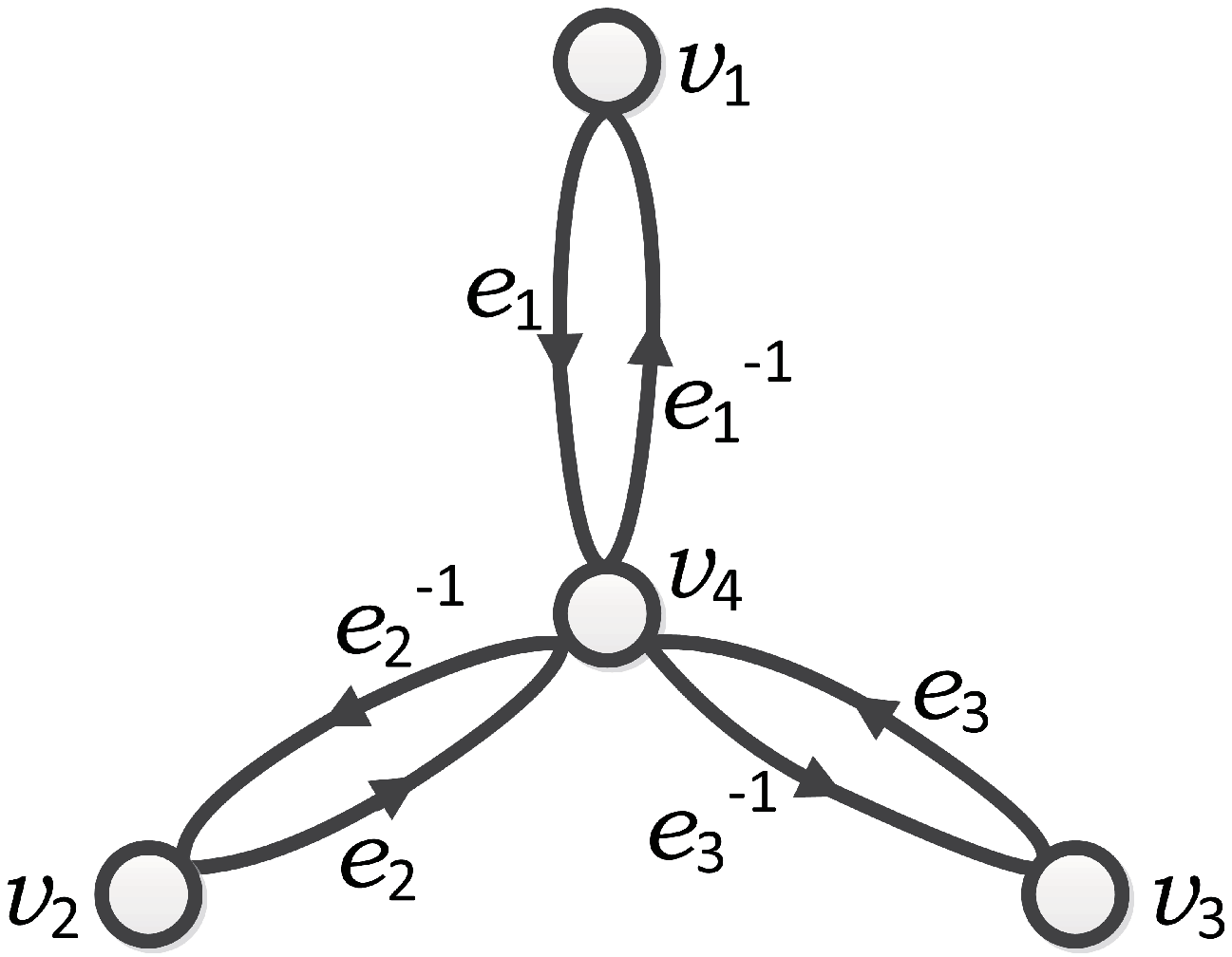}
\end{figure}

${\Spec}({\bf U}^{\rm Gro})=\Big{\{} \pm i, \pm i, 1 , -1 \Big{\}}$.
\end{example}

Let ${\bf T}=({\bf T}_{uv})_{u,v \in V(G)}$ be the $n \times n$ matrix defined as follows: 
\[
{\bf T}_{uv} =\left\{
\begin{array}{ll}
1/d_u  & \mbox{if $(u,v) \in D(G)$, } \\
0 & \mbox{otherwise.}
\end{array}
\right.
\] 
In \cite{EmmsETAL2006}, Emms et al. determined the spectrum of ${\bf U}^{\rm Gro}$ 
by using those of ${\bf T}$.

\begin{theorem}[Emms, Hancock, Severini and Wilson \cite{EmmsETAL2006}] \label{EigenGro}
Let $G$ be a connected graph with $n$ vertices and $m$ edges. 
The transition matrix ${\bf U}^{\rm Gro}$ has $2n$ eigenvalues of the form: 
\[
\lambda = \lambda {}_{\bf T} \pm i \sqrt{1- \lambda {}^2_{\bf T} } , 
\]
where $\lambda {}_{\bf T} $ is an eigenvalue of the matrix ${\bf T}$. 
The remaining $2(m-n)$ eigenvalues of ${\bf U}^{\rm Gro}$ are $\pm 1$ 
with equal multiplicities. 
\end{theorem}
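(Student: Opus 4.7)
The plan is to reduce the characteristic polynomial of $\mathbf{U}^{\mathrm{Gro}}$ to one of size $n$ involving $\mathbf{T}$. Concretely, I would aim to prove the identity
\begin{equation*}
\det(\lambda\mathbf{I}_{2m}-\mathbf{U}^{\mathrm{Gro}}) \;=\; (\lambda^2-1)^{m-n}\det\bigl((\lambda^2+1)\mathbf{I}_n-2\lambda\widetilde{\mathbf{T}}\bigr),
\end{equation*}
where $\widetilde{\mathbf{T}}=\mathbf{D}^{-1/2}\mathbf{A}\mathbf{D}^{-1/2}$ is the symmetric normalized adjacency matrix (with $\mathbf{D}=\mathrm{diag}(d_u)$ and $\mathbf{A}$ the adjacency matrix of $G$); since $\widetilde{\mathbf{T}}$ is similar to $\mathbf{T}$, the two matrices share a spectrum. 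Granting this identity, the theorem drops out immediately: the first factor contributes $\lambda=\pm 1$ each with multiplicity $m-n$, while each eigenvalue $\mu$ of $\mathbf{T}$ yields a quadratic factor $\lambda^2-2\mu\lambda+1$ whose roots are $\mu\pm i\sqrt{1-\mu^2}$.

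To derive this identity, I would first express the Grover coin as a rank-$n$ reflection. Let $\mathbf{K}$ be the $n\times 2m$ matrix with $\mathbf{K}_{u,e}=1/\sqrt{d_u}$ if $t(e)=u$ and $0$ otherwise. A routine computation gives $\mathbf{K}\mathbf{K}^{*}=\mathbf{I}_n$ and, block by block, $\bigoplus_{u\in V(G)}\mathbf{C}_u=2\mathbf{K}^{*}\mathbf{K}-\mathbf{I}_{2m}$, so that
\begin{equation*}
\mathbf{U}^{\mathrm{Gro}}=\mathbf{S}\bigl(2\mathbf{K}^{*}\mathbf{K}-\mathbf{I}_{2m}\bigr)=2\mathbf{S}\mathbf{K}^{*}\mathbf{K}-\mathbf{S}.
\end{equation*}
A second short computation, using $\mathbf{S}_{ef}=\delta_{f,e^{-1}}$, yields $\mathbf{K}\mathbf{S}\mathbf{K}^{*}=\widetilde{\mathbf{T}}$.

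With these ingredients, I would write $\det(\lambda\mathbf{I}_{2m}-\mathbf{U}^{\mathrm{Gro}})=\det\bigl((\lambda\mathbf{I}_{2m}+\mathbf{S})-2\mathbf{S}\mathbf{K}^{*}\mathbf{K}\bigr)$, factor out $\det(\lambda\mathbf{I}_{2m}+\mathbf{S})=(\lambda^2-1)^m$ (which holds because $\mathbf{S}^2=\mathbf{I}_{2m}$ forces $\mathbf{S}$ to have spectrum $\{+1,-1\}$ each with multiplicity $m$), and apply the Weinstein--Aronszajn identity $\det(\mathbf{I}_{2m}-\mathbf{X}\mathbf{Y})=\det(\mathbf{I}_n-\mathbf{Y}\mathbf{X})$ to push the residual determinant down to size $n$. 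Using $(\lambda\mathbf{I}_{2m}+\mathbf{S})^{-1}=(\lambda\mathbf{I}_{2m}-\mathbf{S})/(\lambda^2-1)$ together with $\mathbf{K}\mathbf{S}\mathbf{K}^{*}=\widetilde{\mathbf{T}}$ and $\mathbf{K}\mathbf{K}^{*}=\mathbf{I}_n$, the $n$-dimensional determinant simplifies to $\det\bigl((\lambda^2+1)\mathbf{I}_n-2\lambda\widetilde{\mathbf{T}}\bigr)/(\lambda^2-1)^n$, and recombining gives the target identity.

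The main obstacle is just bookkeeping: tracking the $\sqrt{d_u}$ normalizations so that the two intertwining identities $\bigoplus_u\mathbf{C}_u=2\mathbf{K}^{*}\mathbf{K}-\mathbf{I}_{2m}$ and $\mathbf{K}\mathbf{S}\mathbf{K}^{*}=\widetilde{\mathbf{T}}$ come out exactly right, and observing that the apparent pole at $\lambda=\pm 1$ coming from $(\lambda^2-1)^{-n}$ is cancelled by the $n\times n$ determinant so that the final identity is a genuine polynomial equality (which then permits clean reading-off of the multiset of eigenvalues). This approach is essentially the determinant formula for the second weighted zeta function discussed later in the paper, so it fits naturally with the subsequent development.
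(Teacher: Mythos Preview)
Your argument is correct and is precisely the Konno--Sato route via the second weighted zeta function that the paper cites (and later generalizes) rather than proves: writing $\mathbf{U}^{\mathrm{Gro}}$ as a rank-$n$ perturbation of $\pm\mathbf{S}$, factoring out $\det(\lambda\mathbf{I}+\mathbf{S})$, and applying the Weinstein--Aronszajn/Sylvester identity to collapse to an $n\times n$ determinant is exactly the mechanism behind Theorem~\ref{SatoThm} and equation~(\ref{DetExpressionTrns}) specialized to $w(e)=2/d_{o(e)}$. The only cosmetic difference is that you work with the symmetric normalization $\widetilde{\mathbf{T}}=\mathbf{D}^{-1/2}\mathbf{A}\mathbf{D}^{-1/2}$ via a single isometry $\mathbf{K}$, whereas the paper's framework uses the asymmetric pair $(\mathbf{K},\mathbf{L})$ to land on ${}^{T}\mathbf{W}=2\,{}^{T}\mathbf{T}$ and $\mathbf{D}_w=2\mathbf{I}_n$; since $\widetilde{\mathbf{T}}$ and $\mathbf{T}$ are similar this yields the same factorization $(\lambda^2-1)^{m-n}\det\bigl((\lambda^2+1)\mathbf{I}_n-2\lambda\,\cdot\,\bigr)$ either way.
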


As stated in Section 1, Konno and Sato \cite{KS2012} gave a concise proof of Theorem 
\ref{EigenGro} by using the second weighted zeta function of a graph.

\section{A quaternionic extension of the Grover walk on a graph}

In this section, we define a quaternionic extension of the Grover walk on $G$ and discuss 
some its properties. Beforehand, we give a brief account of quaternionic matrices and 
their right eigenvalues. 
Let $\HM$ be the set of quaternions. $\HM$ is a noncommutative associative 
algebra over $\RM$, whose underlying real vector space has dimension $4$ 
with a basis $1,i,j,k$ which satisfy the following relations: 
\[
i^2=j^2=k^2=-1,\quad ij=-ji=k,\quad jk=-kj=i,\quad ki=-ik=j.
\]
For $x=x_0+x_1i+x_2j+x_3k{\;\in\;}\HM$, $x^*=x_0-x_1i-x_2j-x_3k$ 
denotes the {\it conjugate} of $x$ in $\HM$. 
$|x|=\sqrt{xx^*}=\sqrt{x^*x}=\sqrt{x_0^2+x_1^2+x_2^2+x_3^2}$ is called the {\it norm} of $x$. 
Since $x^{-1}=x^*/|x|^2$ for a nonzero element $x{\;\in\;}\HM$, $\HM$ constitutes a skew field. 
Since quaternions do not mutually commute in general, 
we must treat left eigenvalues and right eigenvalues separately. 
In this paper, we concentrate only on right eigenvalues.

Let $\Mat(m{\times}n,\HM)$ be the set of $m{\times}n$ quaternionic matrices and 
$\Mat(n,\HM)$ the set of $n{\times}n$ quaternionic square matrices. 
For ${\bf M}{\;\in\;}\Mat(m{\times}n,\HM)$, we can write ${\bf M}={\bf M}^S+j{\bf M}^P$ 
uniquely where ${\bf M}^S,{\bf M}^P{\;\in\;}\Mat(m{\times}n,\CM)$. 
Such an expression is called the {\it symplectic decomposition} of ${\bf M}$. 
${\bf M}^S$ and ${\bf M}^P$ are called the {\it simplex part} and the {\it perplex part} 
of ${\bf M}$ respectively. 
The quaternionic conjugate ${\bf M}^*$ of a quaternionic square matrix ${\bf M}$ is obtained 
from ${\bf M}$ by taking the transpose and then taking the quaternionic conjugate of each entry. 
A quaternionic square matrix ${\bf M}$ is said to be {\it quaternionic unitary} if 
${\bf M}^*{\bf M}={\bf M}{\bf M}^*={\bf I}$.
We define $\psi$ to be the map from $\Mat(m{\times}n,\mathbb{H})$ to 
$\Mat(2m{\times}2n,\mathbb{C})$ as follows: 
\[
\psi : \Mat(m{\times}n,\mathbb{H}){\;\longrightarrow\;}\Mat(2m{\times}2n,\mathbb{C})
\quad{\bf M}{\;\mapsto\;}\begin{pmatrix}{\bf M}^S & -\overline{{\bf M}^P} \\ 
{\bf M}^P & \overline{{\bf M}^S}\end{pmatrix},
\]
where $\overline{\bf A}$ is the complex conjugate of a matrix ${\bf A}$. 
Then $\psi$ is an $\RM$-linear map. 
We can easily check that 

\begin{lemma}\label{PsiLem}
Let ${\bf M}{\;\in\;}\Mat(m{\times}n,\HM)$ and ${\bf N}{\;\in\;}
\Mat(n{\times}m,\HM)$. Then 
\[ \psi({\bf M}{\bf N})=\psi({\bf M})\psi({\bf N}). \]
\end{lemma}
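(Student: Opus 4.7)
The plan is to verify the identity by direct computation, using the symplectic decomposition together with a single algebraic identity that governs how $j$ moves past complex scalars. Writing ${\bf M} = {\bf M}^S + j{\bf M}^P$ and ${\bf N} = {\bf N}^S + j{\bf N}^P$ with complex blocks, I will expand the product ${\bf M}{\bf N}$ into four terms, read off its own simplex and perplex parts, and separately carry out the block multiplication $\psi({\bf M})\psi({\bf N})$, checking block by block that it agrees with $\psi({\bf M}{\bf N})$.

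The only substantive step is the expansion of ${\bf M}{\bf N}$. The key ingredient is the entrywise anticommutation
\[
z j = j \overline{z} \qquad (z \in \CM),
\]
which follows from $ji = -k$ and $jk = i$; promoted to matrices it reads ${\bf X} j = j \overline{{\bf X}}$ for any complex matrix ${\bf X}$. Applying this to move every occurrence of $j$ to the left, together with $j^2 = -1$, gives
\begin{align*}
{\bf M}{\bf N} &= {\bf M}^S{\bf N}^S + {\bf M}^S (j{\bf N}^P) + (j{\bf M}^P){\bf N}^S + (j{\bf M}^P)(j{\bf N}^P) \\
&= \bigl({\bf M}^S{\bf N}^S - \overline{{\bf M}^P}{\bf N}^P\bigr) + j\bigl(\overline{{\bf M}^S}{\bf N}^P + {\bf M}^P{\bf N}^S\bigr).
\end{align*}
Hence $({\bf M}{\bf N})^S = {\bf M}^S{\bf N}^S - \overline{{\bf M}^P}{\bf N}^P$ and $({\bf M}{\bf N})^P = {\bf M}^P{\bf N}^S + \overline{{\bf M}^S}{\bf N}^P$, which determines $\psi({\bf M}{\bf N})$.

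On the other hand, $\psi({\bf M})\psi({\bf N})$ is a routine $2\times 2$ block product of complex matrices. The top-left and bottom-left blocks reproduce $({\bf M}{\bf N})^S$ and $({\bf M}{\bf N})^P$ directly, while the top-right and bottom-right blocks match $-\overline{({\bf M}{\bf N})^P}$ and $\overline{({\bf M}{\bf N})^S}$ after conjugation; all that has to be tracked is sign and conjugation bookkeeping. No step presents a real obstacle: the lemma is essentially the statement that $\psi$ is the faithful matrix realization of $\HM$-linear maps obtained by regarding $\HM^n \cong \CM^{2n}$ via the right $\CM$-basis $\{1,j\}$ on each coordinate, and multiplicativity is forced by the single identity $zj = j\overline{z}$.
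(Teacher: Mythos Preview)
Your argument is correct and is exactly the direct verification the paper has in mind: the paper does not spell out a proof but simply prefaces the lemma with ``We can easily check that,'' so your explicit expansion via the symplectic decomposition and the identity $zj=j\overline{z}$ supplies precisely the omitted details.
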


Moreover, if $m=n$, then $\psi$ is an injective $\RM$-algebra homomorphism. 
We consider $\mathbb{H}^n$ as a right vector space. 
$\lambda{\;\in\;}\HM$ is said to be a right eigenvalue of ${\bf M}$ and ${\bf v}{\;\in\;}\mathbb{H}^n$ 
a right eigenvector corresponding to $\lambda$ 
if ${\bf M}{\bf v}={\bf v}{\lambda}$ for ${\bf M}{\;\in\;}\Mat(n,\mathbb{H})$. 
Now we state the facts about right eigenvalues of a quaternionic matrix as follows:

\begin{theorem}\label{QuatEigen}
For any quaternionic matrix ${\bf M}{\;\in\;}\Mat(n,\mathbb{H})$, 
there exist $2n$ complex right eigenvalues of ${\bf M}$ counted with multiplicity,
which can be obtained by solving $\det({\lambda}{\bf I}_{2n}-\psi({\bf M}))=0$. 
They appear in complex conjugate pairs $\lambda_1,\overline{\lambda_1},
{\ldots},\lambda_n,\overline{\lambda_n}$. 
The set of right eigenvalues $\sigma_r({\bf M})$ is given by 
$\sigma_r({\bf M})=\lambda_1^{\mathbb{H}^*}{\cup}{\;\cdots\;}{\cup}\lambda_n^{\mathbb{H}^*}$ 
where $\lambda^{\HM^*}=\{h^{-1}{\lambda}h\;|\;h{\;\in\;}\HM^*=\HM-\{0\}\}$. 
\end{theorem}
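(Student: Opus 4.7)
The plan is to use the embedding $\psi$ from Lemma~\ref{PsiLem} to translate the right eigenvalue problem for $\mathbf{M}$ into an ordinary complex eigenvalue problem for $\psi(\mathbf{M})\in\Mat(2n,\CM)$, then to exploit the block structure of $\psi(\mathbf{M})$ for the conjugate-pair symmetry, and finally to pass from complex representatives to the full similarity classes by right-multiplying eigenvectors by quaternions.

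First I would verify that a complex scalar $\lambda\in\CM$ is a right eigenvalue of $\mathbf{M}$ if and only if it is an eigenvalue of $\psi(\mathbf{M})$. Writing $\mathbf{v}=\mathbf{v}^S+j\mathbf{v}^P$ and $\mathbf{M}=\mathbf{M}^S+j\mathbf{M}^P$, and using $j\alpha=\overline{\alpha}j$ for $\alpha\in\CM$ together with $j^2=-1$, the equation $\mathbf{M}\mathbf{v}=\mathbf{v}\lambda$ splits by symplectic parts into
\[
\mathbf{M}^S\mathbf{v}^S-\overline{\mathbf{M}^P}\mathbf{v}^P=\mathbf{v}^S\lambda,\qquad \mathbf{M}^P\mathbf{v}^S+\overline{\mathbf{M}^S}\mathbf{v}^P=\mathbf{v}^P\lambda,
\]
which is precisely $\psi(\mathbf{M})(\mathbf{v}^S,\mathbf{v}^P)^T=\lambda(\mathbf{v}^S,\mathbf{v}^P)^T$; reassembling a complex eigenvector of $\psi(\mathbf{M})$ yields the converse. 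Hence the complex right eigenvalues of $\mathbf{M}$ are exactly the $2n$ roots of $\det(\lambda\mathbf{I}_{2n}-\psi(\mathbf{M}))=0$ counted with multiplicity.

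For the conjugate-pair structure, I would observe directly from the block form that if $(\mathbf{a},\mathbf{b})^T$ is a complex eigenvector of $\psi(\mathbf{M})$ with eigenvalue $\lambda$, then conjugating both equations above shows that $(-\overline{\mathbf{b}},\overline{\mathbf{a}})^T$ is a complex eigenvector with eigenvalue $\overline{\lambda}$. Equivalently, with $\mathbf{J}=\begin{pmatrix}0 & -\mathbf{I}_n\\ \mathbf{I}_n & 0\end{pmatrix}$, a direct block calculation gives $\psi(\mathbf{M})\mathbf{J}=\mathbf{J}\,\overline{\psi(\mathbf{M})}$, so $\psi(\mathbf{M})$ is similar to $\overline{\psi(\mathbf{M})}$ and its characteristic polynomial has real coefficients. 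Consequently the $2n$ roots group into $n$ conjugate pairs $\lambda_1,\overline{\lambda_1},\ldots,\lambda_n,\overline{\lambda_n}$.

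For the last assertion, if $\mathbf{M}\mathbf{v}=\mathbf{v}\lambda$ and $h\in\HM^*$, then $\mathbf{M}(\mathbf{v}h)=(\mathbf{v}h)(h^{-1}\lambda h)$, so every similarity class $\lambda^{\HM^*}$ consists of right eigenvalues. Conversely, the standard polar form for quaternions shows that every $q\in\HM$ is $\HM^*$-conjugate to a complex number, so every right eigenvalue lies in some such class. Moreover $\lambda$ and $\overline{\lambda}$ already lie in the same similarity class since $j^{-1}(a+bi)j=a-bi$; hence the $n$ conjugate pairs from the previous step produce exactly $n$ similarity classes, giving $\sigma_r(\mathbf{M})=\lambda_1^{\HM^*}\cup\cdots\cup\lambda_n^{\HM^*}$. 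The main obstacle throughout is purely bookkeeping of the non-commutativity: one must carefully distinguish left- and right-multiplications when expanding $\mathbf{M}\mathbf{v}=\mathbf{v}\lambda$, and avoid conflating the complex conjugate $\overline{\lambda}$ with the quaternionic conjugate $\lambda^*$ (they agree only for $\lambda\in\CM$). Once these are handled, each step is a routine direct calculation.
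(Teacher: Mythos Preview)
The paper does not actually prove Theorem~\ref{QuatEigen}; it is stated as a known background fact about quaternionic linear algebra (the theorem is introduced with ``we state the facts about right eigenvalues'' and is followed immediately by a remark and examples, with no proof environment). So there is no in-paper argument to compare your approach against.

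Your argument is the standard one and is correct in outline. One point deserves tightening: from the relation $\psi(\mathbf{M})\mathbf{J}=\mathbf{J}\,\overline{\psi(\mathbf{M})}$ you deduce that the characteristic polynomial of $\psi(\mathbf{M})$ has real coefficients, and then assert that the $2n$ roots group into $n$ conjugate pairs. Real coefficients alone guarantee this pairing for non-real roots, but for a \emph{real} eigenvalue $\lambda$ you still need to know that its algebraic multiplicity is even. The ingredient you already introduced does the job: the conjugate-linear map $T\colon v\mapsto \mathbf{J}\,\overline{v}$ commutes with $\psi(\mathbf{M})$ and satisfies $T^2=-\mathbf{I}_{2n}$, so it equips each generalized eigenspace $V_\lambda$ (for $\lambda\in\RM$) with a quaternionic structure, forcing $\dim_{\CM} V_\lambda$ to be even. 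Once that is made explicit, the grouping into $n$ pairs $\lambda_r,\overline{\lambda_r}$ is fully justified and the remainder of your argument goes through.
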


\begin{remark}\label{RemConjRightEigen}
If ${\bf M}{\bf v}={\bf v}{\lambda}$ then ${\bf M}{\bf v}q={\bf v}q(q^{-1}{\lambda}q)$ for every 
$q{\;\in\;}\HM^*$ and hence ${\bf v}q$ is a right eigenvector corresponding to the right eigenvalue 
$q^{-1}{\lambda}q$. 
\end{remark}

\begin{example}
${\bf M}=\begin{pmatrix}1&0\\0&i\end{pmatrix}$,\;
$\det({\lambda}{\bf I}_{4}-\psi({\bf M}))=\det\begin{pmatrix}\lambda-1&0&0&0\\0&\lambda-i&0&0\\
0&0&\lambda-1&0\\0&0&0&\lambda+i\end{pmatrix}=0$\\
${\quad \Leftrightarrow\ }\lambda=1,\,{\pm}i$,\;
$\sigma_r({\bf M})=\{1\}{\cup}\,i^{\mathbb{H}^*}$.
\end{example}

\begin{example}
${\bf M}=\begin{pmatrix}1&j\\k&i\end{pmatrix}$,\;
$\det({\lambda}{\bf I}_{4}-\psi({\bf M}))=\det\begin{pmatrix}
\lambda-1&0&0&1\\
0&\lambda-i&i&0\\
0&-1&\lambda-1&0\\
i&0&0&\lambda+i\end{pmatrix}
=0$\\
${\quad \Leftrightarrow\ }\lambda=\dfrac{1+\sqrt{3}}{2}{\pm}\dfrac{1-\sqrt{3}}{2}\,i,\;
\dfrac{1-\sqrt{3}}{2}{\pm}\dfrac{1+\sqrt{3}}{2}\,i$,\\
$\sigma_r({\bf M})=\Big{(}\dfrac{1+\sqrt{3}}{2}+\dfrac{1-\sqrt{3}}{2}\,i\Big{)}^{\mathbb{H}^*}{\cup}\,
\Big{(}\dfrac{1-\sqrt{3}}{2}+\dfrac{1+\sqrt{3}}{2}\,i\Big{)}^{\mathbb{H}^*}$.
\end{example}

Now, we give a quaternionic extension of the Grover walk on $G$. 
A discrete-time quaternionic quantum walk is a quantum process on $G$ whose 
state vector, whose entries are quaternions, 
is governed by a quaternionic unitary matrix called the quaternionic transition matrix.  
Let $G$ be a finite connected graph with $n$ vertices and $m$ edges. 
We define the state space to be the quaternionic right Hilbert space 
$\mathscr{H}_{\HM}=\oplus_{e{\in}D(G)}|e\rangle{\HM}$. 
We define the {\em quaternionic transition matrix} ${\bf U} =( {\bf U}_{ef} )_{e,f \in D(G)} $ 
of $G$ as follows:  
\begin{equation}\label{DefU}
{\bf U}_{ef} =\left\{
\begin{array}{ll}
q(e) & \mbox{if $t(f)=o(e)$ and $f \neq e^{-1} $, } \\
q(e) -1 & \mbox{if $f= e^{-1} $, } \\
0 & \mbox{otherwise,}
\end{array}
\right.
\end{equation}
where $q$ is a map from $D(G)$ to $\HM$. 
${\bf U}$ can be viewed as the time evolution operator of 
a discrete-time quaternionic quantum system.
In \cite{KMS2016}, we obtained the necessary and sufficient condition for ${\bf U}$ to be 
quaternionic unitary as follows: 

\begin{theorem}[Konno-Mitsuhashi-Sato \cite{KMS2016}]\label{UnitarityCondition}
{\ }\\
${\bf U}$ is unitary ${\Leftrightarrow}\ 
q_0(e)^2+q_1(e)^2+q_2(e)^2+q_3(e)^2-\dfrac{2q_0(e)}{d_{o(e)}}=0$, 
where $q(e)=q_0(e)+q_1(e)i+q_2(e)j+q_3(e)k$, 
and $q(e)=q(f)$ for any two arcs $e,f{\;\in\;}D(G)$ with $o(e)=o(f)$. 
\end{theorem}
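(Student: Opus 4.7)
My plan is to verify quaternionic unitarity $\mathbf{U}^{*}\mathbf{U}=\mathbf{I}$ directly from the defining formula \eqref{DefU}; by Lemma \ref{PsiLem} and the injectivity of $\psi$ on square matrices, quaternionic unitarity of $\mathbf{U}$ amounts to complex unitarity of $\psi(\mathbf{U})$, and since $\psi(\mathbf{U})$ is a finite-dimensional complex matrix the identity $\mathbf{U}^{*}\mathbf{U}=\mathbf{I}$ alone suffices.

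The first step is a locality observation: $\mathbf{U}_{ge}$ is nonzero only when $o(g)=t(e)$, so both $\mathbf{U}_{ge}$ and $\mathbf{U}_{gf}$ can contribute to $(\mathbf{U}^{*}\mathbf{U})_{ef}=\sum_{g}\overline{\mathbf{U}_{ge}}\,\mathbf{U}_{gf}$ only when $t(e)=t(f)$. This splits the computation into blocks indexed by the common terminal vertex $u=t(e)=t(f)$, with the $g$-sum ranging over the $d_{u}$ arcs emanating from $u$. In each block I would isolate the two special indices $g=e^{-1}$ and $g=f^{-1}$ which carry the "$-1$" summand in \eqref{DefU}; the remaining $g$ contribute $|q(g)|^{2}$.

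Carrying out the diagonal case $(e=f)$ collapses to
\[
(\mathbf{U}^{*}\mathbf{U})_{ee} \;=\; \sum_{o(g)=u}|q(g)|^{2}\;-\;2q_{0}(e^{-1})\;+\;1,
\]
so $(\mathbf{U}^{*}\mathbf{U})_{ee}=1$ is equivalent to $\sum_{o(g)=u}|q(g)|^{2}=2q_{0}(e^{-1})$; in particular the right-hand side depends only on $u$, forcing $q_{0}$ to be constant on $\{g:o(g)=u\}$. For the off-diagonal case $e\neq f$, $t(e)=t(f)=u$, the analogous expansion gives
\[
(\mathbf{U}^{*}\mathbf{U})_{ef} \;=\; \sum_{o(g)=u}|q(g)|^{2}\;-\;q(e^{-1})\;-\;\overline{q(f^{-1})},
\]
and setting this to zero, together with the reality of the left-hand side, forces the $i,j,k$-parts of $q(e^{-1})$ and $q(f^{-1})$ to agree. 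Combined with the diagonal conclusion on $q_{0}$, this yields $q(e^{-1})=q(f^{-1})$ whenever $t(e)=t(f)$, i.e.\ $q$ is constant on arcs sharing a common origin. Substituting this constancy back into the diagonal identity reduces it to $d_{u}|q(g)|^{2}=2q_{0}(g)$ for every $g$ with $o(g)=u$, which is precisely the norm condition stated in the theorem. The converse direction follows by reading these equations in reverse.

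The only delicate point, and what I expect to be the main (though modest) obstacle, is bookkeeping the non-commutative quaternionic conjugation when expanding the cross terms $\overline{(q(e^{-1})-1)}\,q(e^{-1})$ and $\overline{q(f^{-1})}\,(q(f^{-1})-1)$: one must keep the unconjugated factor on the right and resist the temptation to symmetrize $-q(e^{-1})-\overline{q(f^{-1})}$, since it is precisely the asymmetry of these two terms that, via the reality argument, pins down the imaginary components of $q$ individually and delivers the equality $q(e^{-1})=q(f^{-1})$.
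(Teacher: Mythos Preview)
The paper does not contain a proof of this theorem; it is quoted from \cite{KMS2016} and immediately followed only by the remark that $0\le q_{0}(e)\le 2/d_{o(e)}$. Hence there is no argument in the present paper to compare yours against.

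That said, your proposal is a correct self-contained proof. The block decomposition of $\mathbf{U}^{*}\mathbf{U}$ according to the common terminal vertex $u=t(e)=t(f)$ is exactly right, and your computations
\[
(\mathbf{U}^{*}\mathbf{U})_{ee}=\sum_{o(g)=u}|q(g)|^{2}-2q_{0}(e^{-1})+1,\qquad
(\mathbf{U}^{*}\mathbf{U})_{ef}=\sum_{o(g)=u}|q(g)|^{2}-q(e^{-1})-\overline{q(f^{-1})}
\]
are accurate, as is the deduction: reality of $\sum_{o(g)=u}|q(g)|^{2}$ forces the $i,j,k$-parts of $q(e^{-1})$ and $q(f^{-1})$ to coincide, while the diagonal equations force $q_{0}$ to be constant on arcs with origin $u$; substituting back gives $d_{u}|q(g)|^{2}=2q_{0}(g)$. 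Two small points are worth making explicit in a final write-up. First, your appeal to $\psi$ to pass from $\mathbf{U}^{*}\mathbf{U}=\mathbf{I}$ to $\mathbf{U}\mathbf{U}^{*}=\mathbf{I}$ implicitly uses $\psi(\mathbf{M}^{*})=\psi(\mathbf{M})^{*}$, which the paper does not state (though it is routine); alternatively one can simply note that in $\Mat(2m,\mathbb{H})$ a one-sided inverse is two-sided. Second, the off-diagonal argument requires $d_{u}\ge 2$; when $d_{u}=1$ there is a single arc with origin $u$ and both the constancy condition and the off-diagonal equations are vacuous at that vertex, so no separate argument is needed.
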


From Theorem \ref{UnitarityCondition}, it follows that $q_0(e)$ must satisfy 
\begin{equation*}
0{\;\leq\;}q_0(e){\;\leq\;}\dfrac{2}{d_{o(e)}}. 
\end{equation*}
Furthermore we can readily see if $q(e)$ is positive real for each $e{\;\in\;}D(G)$, then 
${\bf U}$ must be ${\bf U}^{\rm Gro}$.

\section{The second weighted zeta function of a graph}

In this section, we give a brief summary of the second weighted zeta functions of a graph. 
We introduce an equivalence relation between cycles in $G$. 
Two cycles $C_1 =(e_1, \cdots ,e_{\ell})$ and 
$C_2 =(f_1, \cdots ,f_{\ell})$ are said to be {\em equivalent} if there exists 
$k$ such that $f_j =e_{j+k}$ for all $j$ where indices are treated modulo $\ell$. 
Let $[C]$ be the equivalence class which contains the cycle $C$. 
Let $B^r$ be the cycle obtained by going $r$ times around a cycle $B$. 
Such a cycle is called a {\em power} of $B$. 
A cycle $C$ is said to be {\em reduced} if both 
$C$ and $C^2$ has no backtracking. 
Furthermore, a cycle $C$ is said to be {\em prime} if it is not a power of 
a strictly smaller cycle.

The {\em Ihara zeta function} of a graph $G$ is 
a function of $t \in {\bf C}$ with $|t|$ sufficiently small, defined by 
\[
{\bf Z} (G, t)= {\bf Z}_G (t)= \prod_{[C]} (1- t^{ \mid C \mid } )^{-1} ,
\]
where $[C]$ runs over all equivalence classes of prime, reduced cycles 
of $G$. 

${\bf Z} (G, t)$ has two types of determinant expressions as explained below. 
Let ${\bf B}=({\bf B}_{ef})_{e,f \in D(G)} $ and 
${\bf J}_0=( {\bf J}_{ef} )_{e,f \in D(G)} $ 
be $2m{\times}2m$ matrices defined as follows: 
\begin{equation}\label{DefJ}
{\bf B}_{ef} =\left\{
\begin{array}{ll}
1 & \mbox{if $t(e)=o(f)$, } \\
0 & \mbox{otherwise,}
\end{array}
\right.
{\bf J}_{ef} =\left\{
\begin{array}{ll}
1 & \mbox{if $f= e^{-1} $, } \\
0 & \mbox{otherwise.}
\end{array}
\right.
\end{equation}
The matrix ${\bf B} - {\bf J}_0 $ is called the {\em edge matrix} of $G$. 
Then we can state two determinant expressions of ${\bf Z} (G, t)$ as follows: 

\begin{theorem}[Hashimoto \cite{Hashimoto1989}; Bass \cite{Bass1992}]
The reciprocal of the Ihara zeta function of $G$ is given by 
\begin{equation}\label{EqnDetExpressions}
{\bf Z} (G, t)^{-1} =\det ( {\bf I}_{2m} -t ( {\bf B} - {\bf J}_0 ))
=(1- t^2 )^{r-1} \det ( {\bf I}_n -t {\bf A}+ 
t^2 ({\bf D} -{\bf I}_n )), 
\end{equation}
where $r$ and ${\bf A}$ are the Betti number and the adjacency matrix 
of $G$ respectively, and ${\bf D} =({\bf D}_{uv})_{u,v{\in}V(G)}$ is the diagonal matrix 
with ${\bf D}_{uu} = \deg u $ for all $u{\;\in\;}V(G)$. 
\end{theorem}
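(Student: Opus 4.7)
The plan is to prove the two equalities separately: the first (Hashimoto's edge determinant formula) via a logarithmic power-series comparison, and the second (Bass's formula) via a block-matrix identity.

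For the equality ${\bf Z}(G,t)^{-1} = \det({\bf I}_{2m} - t({\bf B} - {\bf J}_0))$, I would pass to formal logarithms and compare coefficients of $t^k$. Expanding the matrix product entry-wise shows $\mathrm{tr}(({\bf B}-{\bf J}_0)^k) = N_k$, where $N_k$ counts closed walks of length $k$ in $D(G)$ with no backtracking anywhere, including at the wrap-around (the $-{\bf J}_0$ term is precisely what kills the forbidden transition $e \mapsto e^{-1}$ that ${\bf B}$ would otherwise allow). Hence $-\log\det({\bf I}_{2m} - t({\bf B}-{\bf J}_0)) = \sum_{k\geq 1} N_k\,t^k/k$. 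On the Euler-product side, $\log{\bf Z}(G,t) = \sum_{\ell} p_\ell \sum_{r\geq 1} t^{r\ell}/r$, where $p_\ell$ is the number of equivalence classes of prime reduced cycles of length $\ell$. The combinatorial identity $N_k = \sum_{\ell\mid k}\ell\,p_\ell$ — every closed reduced walk is a cyclic rotation of a unique power of a prime reduced cycle, and the rotation orbit has size equal to the length of that underlying prime — reconciles the two series term by term.

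For the Bass identity, I would introduce the $n\times 2m$ vertex-arc incidence matrices ${\bf S}$ and ${\bf T}$ with ${\bf S}_{v,e}=[o(e)=v]$ and ${\bf T}_{v,e}=[t(e)=v]$, and verify the elementary relations ${\bf T}={\bf S}{\bf J}_0$, ${\bf S}{\bf T}^T={\bf A}$, ${\bf S}{\bf S}^T={\bf T}{\bf T}^T={\bf D}$, and ${\bf T}^T{\bf S}={\bf B}$. Then I would form a suitable $(n+2m)\times(n+2m)$ block matrix built from $t$, ${\bf S}$, ${\bf T}$, and ${\bf J}_0$, and apply the Schur complement in both directions. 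Using ${\bf J}_0^2={\bf I}_{2m}$ one gets $({\bf I}_{2m}-t{\bf J}_0)^{-1} = ({\bf I}_{2m}+t{\bf J}_0)/(1-t^2)$ and $\det({\bf I}_{2m}-t{\bf J}_0) = (1-t^2)^m$. One Schur complement collapses the $2m\times 2m$ block and, after substituting ${\bf S}{\bf T}^T={\bf A}$ and ${\bf S}{\bf J}_0{\bf T}^T={\bf T}{\bf T}^T={\bf D}$, produces the vertex-side determinant $\det({\bf I}_n - t{\bf A} + t^2({\bf D}-{\bf I}_n))$ up to a power of $(1-t^2)$. The other collapses the $n\times n$ block and produces $\det({\bf I}_{2m} - t({\bf B}-{\bf J}_0))$ up to a power of $(1-t^2)$. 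Equating the two and simplifying, with $r-1 = m-n$ for a connected graph, yields the claimed identity.

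The main obstacle is selecting the block matrix so that both Schur complements produce exactly the two determinants with the precise $(1-t^2)^{r-1}$ discrepancy between them. The calculation is routine but bookkeeping-heavy: one must track carefully how the cross-terms ${\bf S}{\bf J}_0{\bf T}^T={\bf D}$ and ${\bf S}{\bf T}^T={\bf A}$ combine with the factor $({\bf I}_{2m}+t{\bf J}_0)/(1-t^2)$ inside the Schur complement, and confirm that the remaining scalar factors reduce to $(1-t^2)^{m-n}$.
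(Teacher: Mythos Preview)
The paper does not supply its own proof of this theorem: it is quoted as a classical background result, attributed to Hashimoto and Bass, and then generalized. Your outline is a standard and correct route to both identities.

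It is still worth comparing your Bass-side argument with the paper's proof of its quaternionic generalization, Theorem~\ref{QuatDetFormula}. There the authors factor ${}^T\!{\bf B}_w={\bf K}\,{}^T\!{\bf L}$ through arc--vertex incidence matrices (essentially your ${\bf S}^T$ and ${\bf T}^T$, with the weights loaded onto ${\bf K}$), peel off the factor $({\bf I}+t\psi({\bf J}_0))$ using ${\bf J}_0^2={\bf I}$, and then apply the Sylvester identity $\det({\bf I}_{4m}-{\bf A}{\bf B})=\det({\bf I}_{2n}-{\bf B}{\bf A})$ to swap the rectangular factors and land on the vertex side. This is the same mechanism as your two-way Schur complement, only packaged without an explicit auxiliary $(n+2m)\times(n+2m)$ block matrix: the Sylvester identity \emph{is} the Schur-complement computation for $\bigl(\begin{smallmatrix}{\bf I}&{\bf A}\\ {\bf B}&{\bf I}\end{smallmatrix}\bigr)$ done both ways. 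Your bookkeeping about how ${\bf S}{\bf T}^T={\bf A}$ and ${\bf S}{\bf J}_0{\bf T}^T={\bf T}{\bf T}^T={\bf D}$ emerge from the cross terms is exactly what the paper carries out, entry by entry, in the weighted quaternionic setting. Specializing the paper's own argument for Theorem~\ref{QuatDetFormula} to $w\equiv 1$ (so that $\psi$ is unnecessary and ${\bf D}_w={\bf D}$, ${\bf W}={\bf A}$, $m-n=r-1$) recovers your proof of the Bass equality almost verbatim. The Hashimoto equality, by contrast, is not argued anywhere in the paper; your power-series comparison via $N_k=\sum_{\ell\mid k}\ell\,p_\ell$ is the usual way to obtain it.
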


We call the middle formula of (\ref{EqnDetExpressions}) 
the determinant expression of {\it Hashimoto type} and 
the right hand side the determinant expression of {\it Bass type} respectively.

We shall define the second weighted zeta function by using a modification 
of the edge matrix. 
Consider an $n \times n$ complex matrix 
${\bf W} =({\bf W}_{uv})_{u,v{\in}V(G)}$ with $(u,v)$-entry 
equals $0$ if $(u,v){\;\notin\;}D(G)$. 
We call ${\bf W}$ a {\em weighted matrix} of $G$.
Let $w(u,v)= {\bf W}_{uv}$ for $u,v \in V(G)$ and 
$w(e)= w(u,v)$ if $e=(u,v) \in D(G)$. Then ${\bf W}_{uv}$ is given by 
\begin{equation}\label{DefWeightedMtx}
{\bf W}_{uv} =\left\{
\begin{array}{ll}
w(e) & \mbox{if $e=(u,v){\;\in\;}D(G)$, } \\
0 & \mbox{otherwise.}
\end{array}
\right.
\end{equation}
For a weighted matrix ${\bf W}$ of $G$,  
let ${\bf B}_w=( {\bf B}^{(w)}_{ef} )_{e,f \in D(G)}$ be 
a $2m \times 2m$ complex matrix defined as follows: 
\begin{equation}\label{DefEdgeMtx}
{\bf B}^{(w)}_{ef} =\left\{
\begin{array}{ll}
w(f) & \mbox{if $t(e)=o(f)$, } \\
0 & \mbox{otherwise.}
\end{array}
\right.
\end{equation}
Then the {\em second weighted zeta function} of $G$ is defined by 
\begin{equation}\label{Def2ndWZF}
{\bf Z}_1 (G,w,t)= \det ( {\bf I}_{2m} -t ( {\bf B}_w - {\bf J}_0 ) )^{-1} . 
\end{equation}
One can consider (\ref{Def2ndWZF}) as a multi-parametrized deformation 
of the determinant expression of Hashimoto type for ${\bf Z}(G,t)$. 
If $w(e)=1$ for all $e \in D(G)$, then the second weighted zeta function of $G$ 
coincides with ${\bf Z}(G,t)$. In \cite{Sato2007}, Sato obtained the determinant expression 
of Bass type for ${\bf Z}_1 (G,w,t)$ as follows:

\begin{theorem}[Sato \cite{Sato2007}]\label{SatoThm}
\begin{equation*}
{\bf Z}_1 (G,w,t )^{-1} =(1- t^2 )^{m-n} 
\det ({\bf I}_n -t {\bf W}+ t^2 ( {\bf D}_w - {\bf I}_n )) , 
\end{equation*}
where $n=|V(G)|$, $m=|E(G)|$ and 
${\bf D}_w =({\bf D}^{(w)}_{uv})_{u,v{\in}V(G)}$ is the diagonal matrix defined by 
\begin{equation}\label{DefWeightedDegreeMtx}
{\bf D}^{(w)}_{uu} = \sum_{e:o(e)=u}w(e)
\end{equation}
for all $u{\;\in\;}V(G)$. 
\end{theorem}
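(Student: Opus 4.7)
The plan is to mimic Bass's original derivation of the determinant expression for the Ihara zeta function, appropriately weighted. The strategy is to build an $(n+2m)\times(n+2m)$ block matrix whose determinant can be evaluated by two different Schur-complement reductions, one reduction producing the Hashimoto-type expression $\det({\bf I}_{2m}-t({\bf B}_w-{\bf J}_0))$ on the arc side, and the other producing the desired Bass-type expression on the vertex side.

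Concretely, I would introduce three auxiliary $n\times 2m$ incidence matrices: the origin indicator $P$ with $P_{v,e}=\mathbf{1}_{o(e)=v}$, the terminus indicator $T$ with $T_{v,e}=\mathbf{1}_{t(e)=v}$, and the weighted origin matrix $Q$ with $Q_{v,e}=w(e)\mathbf{1}_{o(e)=v}$. A direct computation yields the four key identities
\[
T = P{\bf J}_0,\qquad {\bf B}_w = T^t Q = {\bf J}_0 P^t Q,\qquad QP^t = {\bf D}_w,\qquad Q{\bf J}_0 P^t = {\bf W},
\]
using only the definitions (\ref{DefJ}), (\ref{DefWeightedMtx}), (\ref{DefEdgeMtx}) and the symmetry and involutivity of ${\bf J}_0$. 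None of these is hard, but they are the computational backbone and where sign or index slips are most likely to occur.

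Next, I would consider the block matrix
\[
M \;=\; \begin{pmatrix} {\bf I}_n & -t\,Q \\ -\,{\bf J}_0 P^t & {\bf I}_{2m}+t\,{\bf J}_0 \end{pmatrix}.
\]
Eliminating the top-left block via its Schur complement gives
\[
\det M = \det\!\bigl(({\bf I}_{2m}+t\,{\bf J}_0) - (-\,{\bf J}_0 P^t)({\bf I}_n)^{-1}(-tQ)\bigr) = \det\!\bigl({\bf I}_{2m} - t({\bf B}_w - {\bf J}_0)\bigr),
\]
which is precisely ${\bf Z}_1(G,w,t)^{-1}$. On the other hand, since ${\bf J}_0$ is an involution with $m$ eigenvalues $+1$ and $m$ eigenvalues $-1$, one has $\det({\bf I}_{2m}+t{\bf J}_0)=(1-t^2)^m$ and $({\bf I}_{2m}+t{\bf J}_0)^{-1}=({\bf I}_{2m}-t{\bf J}_0)/(1-t^2)$ whenever $|t|<1$. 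Eliminating the bottom-right block via its Schur complement therefore gives
\[
\det M = (1-t^2)^m \det\!\Bigl({\bf I}_n - \tfrac{t}{1-t^2}\,Q({\bf I}_{2m}-t{\bf J}_0){\bf J}_0 P^t\Bigr).
\]
Expanding $Q({\bf J}_0-t{\bf I})P^t = Q{\bf J}_0 P^t - t\,QP^t = {\bf W} - t\,{\bf D}_w$ and pulling the scalar $(1-t^2)^{-1}$ through an $n\times n$ determinant yields $(1-t^2)^{m-n}\det({\bf I}_n - t{\bf W} + t^2({\bf D}_w-{\bf I}_n))$. Equating the two expressions for $\det M$ proves the theorem for $|t|$ small, and both sides are polynomials in $t$, so the identity extends.

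The main obstacle is really just the bookkeeping: one has to hit on the correct signs in the off-diagonal blocks of $M$ so that the top-left Schur complement produces $+t{\bf J}_0 - t{\bf B}_w$ (matching $-t({\bf B}_w-{\bf J}_0)$) while the bottom-right Schur complement produces the combination $-t{\bf W} + t^2{\bf D}_w - t^2{\bf I}_n$ with the right signs after the $(1-t^2)^{-1}$ clears. Once the four matrix identities involving $P$, $Q$, $T$, ${\bf J}_0$, ${\bf B}_w$, ${\bf D}_w$, ${\bf W}$ are established, the rest is a standard Schur-complement exercise with no further combinatorics.
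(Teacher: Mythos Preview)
Your proposal is correct. The paper itself does not prove Theorem~\ref{SatoThm} directly (it is quoted from \cite{Sato2007}); however, it does prove the quaternionic generalization, Theorem~\ref{QuatDetFormula}, and the method there is essentially the same as yours. The paper's matrices ${\bf K}$ and ${\bf L}$ are exactly the transposes of your $Q$ and $T$, and the paper uses the factorization ${}^{T}\!{\bf B}_{w}={\bf K}\,{}^{T}\!{\bf L}$ together with the Sylvester identity $\det({\bf I}_{2m}-AB)=\det({\bf I}_{n}-BA)$, after first factoring out $\det({\bf I}_{2m}+t{\bf J}_{0})=(1-t^{2})^{m}$. Your block matrix $M$ with its two Schur complements is precisely the standard derivation of that Sylvester identity, so the two arguments coincide in substance. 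Two small cosmetic differences: the paper works with the transposed edge matrix (cf.\ the Remark after Theorem~\ref{SatoThm}), and it evaluates ${}^{T}\!{\bf L}({\bf I}+t{\bf J}_{0})^{-1}{\bf K}$ entry by entry, whereas you pre-compute the identities $QP^{t}={\bf D}_{w}$ and $Q{\bf J}_{0}P^{t}={\bf W}$ and then read the answer off algebraically, which is a little cleaner.
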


\begin{remark}
We mention that taking transpose, the following equation also holds: 
\begin{equation}\label{DetExpressionTrns}
\det ( {\bf I}_{2m} -t ( {}^T\!{\bf B}_w - {\bf J}_0 ) ) =(1- t^2 )^{m-n} 
\det ({\bf I}_n -t {}^T\!{\bf W}+ t^2 ( {\bf D}_w - {\bf I}_n )), 
\end{equation}
where ${}^T\!{\bf M}$ denotes the transpose of ${\bf M}$. 
We will show a quaternionic generalization of (\ref{DetExpressionTrns}) 
and apply it to the spectral problem for our quaternionic quantum walk on $G$ 
in later sections. 
\end{remark}

\section{A quaternionic generalization of the determinant expressions}

In this section, we shall give a quaternionic generalization of (\ref{DetExpressionTrns}). 
Assume that $w(e)$ is in $\HM$ for every $e{\;\in\;}D(G)$ 
in (\ref{DefWeightedMtx}), (\ref{DefEdgeMtx}) and (\ref{DefWeightedDegreeMtx}). 
Let ${\bf K}=({\bf K}_{ev})_{e{\in}D(G),v{\in}V(G)}$ and 
${\bf L}=({\bf L}_{ev})_{e{\in}D(G),v{\in}V(G)}$ be $2m{\times}n$ matrices defined as follows: 
\begin{equation*}
{\bf K}_{ev}=
\begin{cases}
w(e) & \text{if $o(e)=v$,}\\
0 & \text{otherwise,}
\end{cases}\quad
{\bf L}_{ev}=
\begin{cases}
1 & \text{if $t(e)=v$,}\\
0 & \text{otherwise,}
\end{cases}
\end{equation*}
where column index and row index are ordered by fixed sequences 
$v_1,{\cdots},v_n$ and $e_1,{\cdots},e_{2m}$ such that $e_{2r}=e_{2r-1}^{-1}$ for $r=1,{\cdots},m$ 
respectively. 
Then we can readily see that ${}^T\!{\bf B}_w={\bf K}{}^T\!{\bf L}$. 
Let ${\bf K}={\bf K}^S+j{\bf K}^P$ be the symplectic decomposition. Then it follows that 
\begin{equation}\label{EqnPsiKandL}
\psi({\bf K})=\begin{pmatrix}
{\bf K}^S & -\overline{{\bf K}^P} \\ {\bf K}^P & \overline{{\bf K}^S}
\end{pmatrix},\quad 
\psi({\bf L})=\begin{pmatrix}
{\bf L} & 0 \\ 0 & {\bf L}
\end{pmatrix}, 
\end{equation}
and by Lemma \ref{PsiLem} we obtain 
\begin{equation*}
\psi({}^T\!{\bf B}_w)=\psi({\bf K})\psi({}^T\!{\bf L}).
\end{equation*}
In view of (\ref{EqnPsiKandL}), rows and columns of $\psi({\bf K})$ and $\psi({\bf L})$ are indexed by 
the disjoint union of two copies $D(G)_{\pm}$ of $D(G)$ and 
that of two copies $V(G)_{\pm}$ of $V(G)$ respectively. 
We denote these disjoint unions by 
\begin{equation}\label{DfnDisjUnion}
\begin{split}
D(G)_+{\dot\cup}D(G)_-&=\{{e_1}_+,{\cdots},{e_{2m}}_+,{e_1}_-,{\cdots},{e_{2m}}_-\}\quad 
({e_r}_{\pm}{\;\in\;}D(G)_{\pm}),\\
V(G)_+{\dot\cup}V(G)_-&=\{{v_1}_+,{\cdots},{v_n}_+,{v_1}_-,{\cdots},{v_n}_-\}\quad 
({v_r}_{\pm}{\;\in\;}V(G)_{\pm}),
\end{split}
\end{equation}
where ${e_r}_+$ and ${e_r}_-$ correspond to $e_r{\;\in\;}D(G)$ and 
${v_r}_+$ and ${v_r}_-$ to $v_r{\;\in\;}V(G)$. 
Orders of indices of $\psi({\bf K})$ and 
$\psi({\bf L})$ follow the alignment in (\ref{DfnDisjUnion}) so that 
${\bf B}_w$ turns out to be  
\[
\psi({}^T\!{\bf B}_w)=
\bordermatrix{
\!\! & e_{1_+}{\!\cdots\,}e_{2m_+} & \hspace{-3mm} e_{1_-}{\!\cdots\,}e_{2m_-} \cr
\substack{\vspace{-1mm}e_{1_+}\\{\vdots}\\e_{{2m}_+}} & \hspace{-3mm} \mbox{\large{${}^T\!{\bf B}_w^S$}} &  \mbox{\large{$-\overline{{}^T\!{\bf B}_w^P}$}} \cr
\substack{\vspace{-1mm}e_{1_-}\\{\vdots}\\e_{{2m}_-}} & \hspace{-3mm} \mbox{\large{${}^T\!{\bf B}_w^P$}} &  \mbox{\large{$\overline{{}^T\!{\bf B}_w^S}$}}
}
=\bordermatrix{
\!\! & e_{1_+}{\!\cdots\ }e_{2m_+} & \hspace{-3mm} e_{1_-}{\!\cdots\ }e_{2m_-} \cr
\substack{\vspace{-1mm}e_{1_+}\\{\vdots}\\e_{{2m}_+}} & \hspace{-3mm} \mbox{\large{$({\bf K}^S){}^T\!{\bf L}$}} &  \mbox{\large{$(-\overline{{\bf K}^P}){}^T\!{\bf L}$}} \cr
\substack{\vspace{-1mm}e_{1_-}\\{\vdots}\\e_{{2m}_-}} & \hspace{-3mm} \mbox{\large{$({\bf K}^P){}^T\!{\bf L}$}} &  \mbox{\large{$(\overline{{\bf K}^S}){}^T\!{\bf L}$}}
}.
\]
We also notice that ${}^T\!{\bf W}={}^T\!{\bf L}{\bf K}$. 

\begin{theorem}\label{QuatDetFormula}
Let $t$ be a complex variable. Then
\begin{equation*}
\det({\bf I}_{4m}-t\psi({}^T\!{\bf B}_w-{\bf J}_0))=
(1-t^2)^{2m-2n}
\det({\bf I}_{2n}-t\psi({}^T\!{\bf W})+t^2(\psi({\bf D}_w)-{\bf I}_{2n}))
\end{equation*}

\end{theorem}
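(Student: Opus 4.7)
The strategy is to run the classical Bass-type block-matrix argument for (\ref{DetExpressionTrns}) inside the $\psi$-image of the quaternionic matrices. Because $\psi$ preserves products (Lemma \ref{PsiLem}) and the scalar variable $t$ lives in $\CM$, manipulations with $t$ that would be obstructed on the quaternionic side by non-commutativity become harmless once everything is transferred to $\mathrm{Mat}(\cdot,\CM)$ via $\psi$.

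The first step is to record four quaternionic matrix factorizations:
(i) ${}^T\!{\bf B}_w = {\bf K}\,{}^T\!{\bf L}$ and (ii) ${}^T\!{\bf W} = {}^T\!{\bf L}\,{\bf K}$, both already noted in the excerpt;
(iii) ${\bf D}_w = {}^T\!{\bf L}\,{\bf J}_0\,{\bf K}$, which follows by an entrywise check using $({\bf J}_0{\bf K})_{e,v} = w(e^{-1})$ when $t(e)=v$ together with the bijection $e\mapsto e^{-1}$ between arcs with $t(e)=u$ and arcs with $o(e^{-1})=u$; and
(iv) ${\bf J}_0^2 = {\bf I}_{2m}$, immediate from $(e^{-1})^{-1}=e$.
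Applying Lemma \ref{PsiLem} transports these to the complex identities
$\psi({}^T\!{\bf B}_w) = \psi({\bf K})\psi({}^T\!{\bf L})$,
$\psi({}^T\!{\bf W}) = \psi({}^T\!{\bf L})\psi({\bf K})$,
$\psi({\bf D}_w) = \psi({}^T\!{\bf L})\psi({\bf J}_0)\psi({\bf K})$,
and $\psi({\bf J}_0)^2 = {\bf I}_{4m}$.

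The second step is to compute $\det M$ in two ways for the auxiliary $(4m+2n)\times(4m+2n)$ complex block matrix
\[
M = \begin{pmatrix} {\bf I}_{4m} + t\psi({\bf J}_0) & -t\psi({\bf K}) \\ -\psi({}^T\!{\bf L}) & {\bf I}_{2n} \end{pmatrix}.
\]
Taking the Schur complement of the $(2,2)$ block gives
$\det M = \det\bigl({\bf I}_{4m} - t(\psi({}^T\!{\bf B}_w)-\psi({\bf J}_0))\bigr)$, the left-hand side of the theorem. Taking the Schur complement of the $(1,1)$ block uses $\det({\bf I}_{4m} + t\psi({\bf J}_0)) = (1-t^2)^{2m}$ (since $\psi({\bf J}_0)$ has eigenvalues $\pm 1$ with multiplicity $2m$ each by its block-diagonal form) and $({\bf I}_{4m} + t\psi({\bf J}_0))^{-1} = (1-t^2)^{-1}({\bf I}_{4m} - t\psi({\bf J}_0))$; substituting the three $\psi$-identities above and clearing a factor of $(1-t^2)^{2n}$ yields
\[
\det M = (1-t^2)^{2m-2n}\det\bigl({\bf I}_{2n} - t\psi({}^T\!{\bf W}) + t^2(\psi({\bf D}_w) - {\bf I}_{2n})\bigr),
\]
which is the right-hand side. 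Equating the two expressions for $\det M$ completes the proof; since both sides are polynomials in $t$, there is no restriction needed on $t$ at the end.

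The main obstacle is genuinely only bookkeeping. One must verify (iii) with some care and confirm that $\psi$ behaves correctly on ${\bf L}$ and ${\bf J}_0$; this is painless because both matrices have real entries, so their perplex parts vanish and $\psi$ intertwines transposition in the expected way. Once these preliminaries are in place, no new ingredient beyond Lemma \ref{PsiLem} and the classical scalar Schur-complement trick is needed.
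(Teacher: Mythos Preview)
Your proof is correct and follows the same overall Bass-type strategy as the paper: factor $\psi({}^T\!{\bf B}_w)=\psi({\bf K})\psi({}^T\!{\bf L})$, pull out $\det({\bf I}_{4m}+t\psi({\bf J}_0))=(1-t^2)^{2m}$, swap the order of the rectangular factors, and identify the resulting $2n\times 2n$ matrix. Your Schur-complement packaging is equivalent to the paper's use of $\det({\bf I}-AB)=\det({\bf I}-BA)$ after the same factor-out step.

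The one genuine difference is how the $2n\times 2n$ piece is identified. The paper computes the inverse $({\bf I}_{4m}+t\psi({\bf J}_0))^{-1}$ explicitly and then verifies entrywise, over the index set $(V(G)_+\dot\cup V(G)_-)\times(V(G)_+\dot\cup V(G)_-)$, that $\psi({}^T\!{\bf L})({\bf I}_{4m}+t\psi({\bf J}_0))^{-1}\psi({\bf K})=\frac{1}{1-t^2}\psi({}^T\!{\bf W})-\frac{t}{1-t^2}\psi({\bf D}_w)$; this is an eight-case computation in the symplectic coordinates. You bypass this by establishing the quaternionic identity ${\bf D}_w={}^T\!{\bf L}\,{\bf J}_0\,{\bf K}$ at the outset, pushing it through $\psi$ via Lemma~\ref{PsiLem}, and then using the closed-form inverse $({\bf I}_{4m}+t\psi({\bf J}_0))^{-1}=(1-t^2)^{-1}({\bf I}_{4m}-t\psi({\bf J}_0))$ so the whole thing becomes a two-line algebraic substitution. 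This is cleaner and makes more systematic use of the functoriality of $\psi$; the paper's route, by contrast, makes the block structure of the symplectic decomposition completely explicit, which has some expository value but is not needed for the proof itself.
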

\begin{proof}
We notice ${}^T\!{\bf B}_w={\bf K}{}^T\!{\bf L}$ at first. 
It can be shown by direct calculations that 
\begin{equation}\label{EqnSecDeterminants}
\begin{split}
\det({\bf I}_{4m}-t\psi({\bf K}{}^T\!{\bf L}-{\bf J}_0))&=
\det({\bf I}_{4m}-t\psi({\bf K})\psi({}^T\!{\bf L})+t\psi({\bf J}_0))\\
&=\det({\bf I}_{4m}-t\psi({\bf K})\psi({}^T\!{\bf L})({\bf I}_{4m}+t\psi({\bf J}_0))^{-1})
\det({\bf I}_{4m}+t\psi({\bf J}_0))\\
&=(1-t^2)^{2m}
\det({\bf I}_{4m}-t\psi({\bf K})\psi({}^T\!{\bf L})({\bf I}_{4m}+t\psi({\bf J}_0))^{-1})\\
&=(1-t^2)^{2m}
\det({\bf I}_{2n}-t\psi({}^T\!{\bf L})({\bf I}_{4m}+t\psi({\bf J}_0))^{-1}\psi({\bf K})), 
\end{split}
\end{equation}
and that 
\begin{equation*}
\psi({}^T\!{\bf L})({\bf I}_{4m}+t\psi({\bf J}_0))^{-1}\psi({\bf K})
=\begin{pmatrix}
{{}^T\!{\bf L}} & 0 \\ 0 & {{}^T\!{\bf L}}
\end{pmatrix}
\Bigg{\{}{\bf I}_{2m}{\otimes}\begin{pmatrix}
\frac{1}{1-t^2} & -\frac{t}{1-t^2} \\ -\frac{t}{1-t^2} & \frac{1}{1-t^2}\end{pmatrix}\Bigg{\}}
\begin{pmatrix}
{\bf K}^S & -\overline{{\bf K}^P} \\ {\bf K}^P & \overline{{\bf K}^S}
\end{pmatrix}.
\end{equation*}
Putting ${\bf X}=({\bf I}_{4m}+t\psi({\bf J}_0))^{-1}$, 
we see if $f'{\;\neq\;}e',{e'}^{-1}$ in $D(G)_+{\dot\cup}D(G)_-$, 
then ${\bf X}_{e'f'}=0$. 
For every $u',v'{\;\in\;}V(G)_+{\dot\cup}V(G)_-$, the $(u',v')$-entry of 
$\psi({}^T\!{\bf L}){\bf X}\psi({\bf K})$ is given by 
\begin{equation*}
(\psi({}^T\!{\bf L}){\bf X}\psi({\bf K}))_{u'v'}=\sum_{e',f'{\in}D(G)_+{\dot\cup}D(G)_-}
\psi({}^T\!{\bf L})_{u'e'}{\bf X}_{e'f'}\psi({\bf K})_{f'v'}.
\end{equation*}
Hence if $(v,u)=e{\;\in\;}D(G)$, then $f'=e'$ and it follows that 
\begin{equation*}
\begin{split}
(\psi({}^T\!{\bf L}){\bf X}\psi({\bf K}))_{u_+v_+}&=
\psi({}^T\!{\bf L})_{u_+e_+}{\bf X}_{e_+e_+}\psi({\bf K})_{e_+v_+}
=\dfrac{1}{1-t^2}w(e)^S=\dfrac{1}{1-t^2}w((v,u))^S,\\
(\psi({}^T\!{\bf L}){\bf X}\psi({\bf K}))_{u_+v_-}&=
\psi({}^T\!{\bf L})_{u_+e_+}{\bf X}_{e_+e_+}\psi({\bf K})_{e_+v_-}
=\dfrac{1}{1-t^2}(-\overline{w(e)^P})=-\dfrac{1}{1-t^2}\overline{w((v,u))^P},\\
(\psi({}^T\!{\bf L}){\bf X}\psi({\bf K}))_{u_-v_+}&=
\psi({}^T\!{\bf L})_{u_-e_-}{\bf X}_{e_-e_-}\psi({\bf K})_{e_-v_+}
=\dfrac{1}{1-t^2}w(e)^P=\dfrac{1}{1-t^2}w((v,u))^P,\\
(\psi({}^T\!{\bf L}){\bf X}\psi({\bf K}))_{u_-v_-}&=
\psi({}^T\!{\bf L})_{u_-e_-}{\bf X}_{e_-e_-}\psi({\bf K})_{e_-v_-}
=\dfrac{1}{1-t^2}\overline{w(e)^S}=\dfrac{1}{1-t^2}\overline{w((v,u))^S}.
\end{split}
\end{equation*}
Else if $u=v$, then $f'={e'}^{-1}$ and it follows that 
\begin{equation*}
\begin{split}
(\psi({}^T\!{\bf L}){\bf X}\psi({\bf K}))_{u_+u_+}&=
\sum_{\substack{e{\in}D(G)\\o(e)=u}}
\psi({}^T\!{\bf L})_{u_+e_+^{-1}}{\bf X}_{e_+^{-1}e_+}\psi({\bf K})_{e_+u_+}
=-\dfrac{t}{1-t^2}\sum_{\substack{e{\in}D(G)\\o(e)=u}}w(e)^S,\\
(\psi({}^T\!{\bf L}){\bf X}\psi({\bf K}))_{u_+u_-}&=
\sum_{\substack{e{\in}D(G)\\o(e)=u}}
\psi({}^T\!{\bf L})_{u_+e_+^{-1}}{\bf X}_{e_+^{-1}e_+}\psi({\bf K})_{e_+u_-}
=\dfrac{t}{1-t^2}\sum_{\substack{e{\in}D(G)\\o(e)=u}}\overline{w(e)^P},\\
(\psi({}^T\!{\bf L}){\bf X}\psi({\bf K}))_{u_-u_+}&=
\sum_{\substack{e{\in}D(G)\\o(e)=u}}
\psi({}^T\!{\bf L})_{u_-e_-^{-1}}{\bf X}_{e_-^{-1}e_-}\psi({\bf K})_{e_-u_+}
=-\dfrac{t}{1-t^2}\sum_{\substack{e{\in}D(G)\\o(e)=u}}w(e)^P,\\
(\psi({}^T\!{\bf L}){\bf X}\psi({\bf K}))_{u_-u_-}&=
\sum_{\substack{e{\in}D(G)\\o(e)=u}}
\psi({}^T\!{\bf L})_{u_-e_-^{-1}}{\bf X}_{e_-^{-1}e_-}\psi({\bf K})_{e_-u_-}
=-\dfrac{t}{1-t^2}\sum_{\substack{e{\in}D(G)\\o(e)=u}}\overline{w(e)^S}.
\end{split}
\end{equation*}
Otherwise, we can readily check that all of 
$(u_+,v_+),(u_+,v_-),(u_-,v_+),(u_-,v_-)$-entries of $\psi({}^T\!{\bf L}){\bf X}\psi({\bf K})$ 
equal $0$. 

Comparing entries of $\psi({}^T\!{\bf L}){\bf X}\psi({\bf K})$ with 
those of $\psi({\bf W})$ and $\psi({\bf D}_w)$, it follows that  
\begin{equation*}
\psi({}^T\!{\bf L}){\bf X}\psi({\bf K})=
\dfrac{1}{1-t^2}\psi({}^T\!{\bf W})-\dfrac{t}{1-t^2}\psi({\bf D}_w).
\end{equation*}
Consequently, we obtain the following equation from (\ref{EqnSecDeterminants}) as desired.
\begin{equation*}
\begin{split}
\det({\bf I}_{4m}-t\psi({\bf K}{}^T\!{\bf L}-{\bf J}_0))&=
(1-t^2)^{2m}
\det({\bf I}_{2n}-\dfrac{t}{1-t^2}\psi({}^T\!{\bf W})+\dfrac{t^2}{1-t^2}\psi({\bf D}_w))\\
&=(1-t^2)^{2m-2n}
\det((1-t^2){\bf I}_{2n}-t\psi({}^T\!{\bf W})+t^2\psi({\bf D}_w))\\
&=(1-t^2)^{2m-2n}
\det({\bf I}_{2n}-t\psi({}^T\!{\bf W})+t^2(\psi({\bf D}_w)-{\bf I}_{2n})).
\end{split}
\end{equation*}
\end{proof}

\section{The right spectrum of the quaternionic quantum walk on a graph}

In this section, we derive the set of right eigenvalues of the quaternionic 
transition matrix ${\bf U}$ 
by using eigenvalues of a complex matrix which can be easily derived from 
the map $q : D(G){\longrightarrow}\HM$ given in (\ref{DefU}). 
Putting $t=1/{\lambda}$ in Theorem \ref{QuatDetFormula}, we obtain 
\begin{equation}\label{EqnQuatCharactEqn}
\begin{split}
\det(\lambda{\bf I}_{4m}-\psi({}^T\!{\bf B}_w-{\bf J}_0))
&=({\lambda}^2-1)^{2m-2n}
\det({\lambda}^2{\bf I}_{2n}-{\lambda}\psi({}^T\!{\bf W})+\psi({\bf D}_w)-{\bf I}_{2n}).
\end{split}
\end{equation}
Setting $w(e)=q(e)$ and 
comparing (\ref{DefU}), (\ref{DefJ}) and (\ref{DefEdgeMtx}), we readily see 
${\bf U}={}^T\!{\bf B}_w-{\bf J}_0$. 
Thus applying (\ref{EqnQuatCharactEqn}), we obtain 
\begin{equation}\label{EqnQuatEigenEqn}
\begin{split}
\det(\lambda{\bf I}_{4m}-\psi({\bf U}))
&=({\lambda}^2-1)^{2m-2n}
\det({\lambda}^2{\bf I}_{2n}-{\lambda}\psi({}^T\!{\bf W})+\psi({\bf D}_w)-{\bf I}_{2n}).\\
\end{split}
\end{equation}
If $\psi({}^T\!{\bf W})$ and $\psi({\bf D}_w)$ are simultaneously triangularizable, 
namely, there exist a regular matrix ${\bf P}{\;\in\;}\Mat(2n,\CM)$ such that 
\begin{equation}\label{TriangWandD}
{\bf P}^{-1}\psi({}^T\!{\bf W}){\bf P}=\begin{pmatrix}
\mu_1 & 0 & \cdots & 0 \\
* & \mu_2 & \cdots & 0 \\
\vdots & \ddots & \ddots & \vdots \\
* & \cdots & * & \mu_{2n}
\end{pmatrix},\quad
{\bf P}^{-1}\psi({\bf D}_w){\bf P}=\begin{pmatrix}
\xi_1 & 0 & \cdots & 0 \\
* & \xi_2 & \cdots & 0 \\
\vdots & \ddots & \ddots & \vdots \\
* & \cdots & * & \xi_{2n}
\end{pmatrix},
\end{equation}
then by (\ref{EqnQuatEigenEqn}) the characteristic equation of $\psi({\bf U})$ 
turns out to be 
\begin{equation}\label{EqnFactQuatEigenEqn}
\begin{split}
\det(\lambda{\bf I}_{4m}-\psi({\bf U}))
&=({\lambda}^2-1)^{2m-2n}
\det({\lambda}^2{\bf I}_{2n}-{\lambda}{\bf P}^{-1}\psi({}^T\!{\bf W}){\bf P}
+{\bf P}^{-1}\psi({\bf D}_w){\bf P}-{\bf I}_{2n}).\\
&=({\lambda}^2-1)^{2m-2n}\prod_{r=1}^{2n}(\lambda^2-\lambda\mu_r+\xi_r-1)=0.
\end{split}
\end{equation}
We notice if $(\mu_r,\,\mu_s)$ ($r{\;\neq\;}s$) is a complex conjugate pair as 
stated in Theorem \ref{QuatEigen} 
then so is $(\xi_r,\,\xi_s)$. 
Observing that $\det(\lambda{\bf I}_{4m}-\psi({\bf U}))$ is a polynomial of $\lambda$, 
we obtain by solving (\ref{EqnFactQuatEigenEqn}) that 

\begin{theorem}\label{GeneralEigenFormula}
$|{\Spec}(\psi({\bf U}))|=4m$. Suppose that $\psi({}^T\!{\bf W})$ and $\psi({\bf D}_w)$ are 
simultaneously triangularizable. 
If $G$ is not a tree, then $4n$ of them are 
\begin{equation*}
\lambda = \dfrac{\mu_r \pm \sqrt{\mu_r^2-4(\xi_r-1)}}{2} \quad (r=1,\cdots,2n),
\end{equation*}
where $\mu_r{\;\in\;}{\Spec}(\psi({}^T\!{\bf W})),\xi_r{\;\in\;}{\Spec}(\psi({\bf D}_w))$ 
as presented in (\ref{TriangWandD}). 
The remaining $4(m-n)$ are $\pm 1$ with equal multiplicities. 
If $G$ is a tree, then 
\begin{equation*}
\begin{split}
&{\Spec}(\psi ({\bf U}))
=\Big{\{}\dfrac{\mu_r \pm \sqrt{\mu_r^2-4(\xi_r-1)}}{2} 
\bigl|\; r=1,\cdots,2n\Big{\}}
-\{1,1,-1,-1\}.
\end{split}
\end{equation*}
$\sigma_r({\bf U})=\bigcup_{\lambda{\in}{\Spec}(\psi({\bf U}))}\lambda^{\HM^*}$
\end{theorem}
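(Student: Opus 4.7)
The plan is to read the right spectrum of $\mathbf{U}$ off the already factored characteristic polynomial (\ref{EqnFactQuatEigenEqn}) of $\psi(\mathbf{U})$, and then translate the complex spectrum to the right quaternionic spectrum via Theorem \ref{QuatEigen}. For the cardinality statement, I observe that $\mathbf{U}{\;\in\;}\Mat(2m,\HM)$, so $\psi(\mathbf{U}){\;\in\;}\Mat(4m,\CM)$ and its characteristic polynomial has degree exactly $4m$, giving $|\Spec(\psi(\mathbf{U}))|=4m$ counted with multiplicity.

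For the non-tree case ($m{\;\geq\;}n$, hence $2m-2n{\;\geq\;}0$), I would simply enumerate roots of (\ref{EqnFactQuatEigenEqn}). The prefactor $(\lambda^2-1)^{2m-2n}$ contributes $+1$ and $-1$, each with multiplicity $2(m-n)$, yielding $4(m-n)$ eigenvalues $\pm 1$ with equal multiplicities. Each of the $2n$ quadratic factors $\lambda^2-\lambda\mu_r+(\xi_r-1)$ supplies its two roots via the quadratic formula, producing the remaining $4n$ eigenvalues of the required form. The totals $4(m-n)+4n=4m$ match the degree count and exhaust $\Spec(\psi(\mathbf{U}))$.

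For the tree case, $m=n-1$ forces the prefactor exponent $2m-2n=-2$. Since the left-hand side of (\ref{EqnFactQuatEigenEqn}) is nevertheless a polynomial of degree $4m=4n-4$, the degree-$4n$ product $\prod_{r=1}^{2n}(\lambda^2-\lambda\mu_r+\xi_r-1)$ must be divisible by $(\lambda-1)^2(\lambda+1)^2$; after cancellation one is left with $4n-4=4m$ roots, which is precisely the multiset of quadratic roots with $\{1,1,-1,-1\}$ removed. I expect this cancellation to be the main technical obstacle, since it is not visible at the level of the abstract factorization; to justify it rigorously I would either argue by continuity from perturbed non-tree graphs, or exploit the fact that a tree has leaves of degree $1$, which forces specific entries of $\psi({}^T\!\mathbf{W})$ and $\psi(\mathbf{D}_w)$ to create the required factors $(\lambda{\pm}1)$ inside the product.

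Finally, the assertion $\sigma_r(\mathbf{U})=\bigcup_{\lambda{\in}\Spec(\psi(\mathbf{U}))}\lambda^{\mathbb{H}^*}$ is a direct application of Theorem \ref{QuatEigen}: the $2\cdot 2m=4m$ complex right eigenvalues of $\mathbf{U}$ are, by definition, the roots of $\det(\lambda\mathbf{I}_{4m}-\psi(\mathbf{U}))=0$, and since any complex-conjugate pair $(\lambda,\overline{\lambda})$ determines a single quaternionic similarity class $\lambda^{\HM^*}=\overline{\lambda}^{\HM^*}$, taking the union of these classes over all $\lambda{\in}\Spec(\psi(\mathbf{U}))$ recovers $\sigma_r(\mathbf{U})$.
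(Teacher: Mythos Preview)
Your proposal is correct and mirrors the paper's own argument, which consists precisely of reading off the roots from the factored identity (\ref{EqnFactQuatEigenEqn}) after noting that the left-hand side is a polynomial in $\lambda$. Your worry about the tree case is unnecessary: the identity $\det(\lambda{\bf I}_{4m}-\psi({\bf U}))\cdot(\lambda^2-1)^2=\prod_{r=1}^{2n}(\lambda^2-\lambda\mu_r+\xi_r-1)$ already shows that the multiset of roots of the product equals $\Spec(\psi({\bf U}))\cup\{1,1,-1,-1\}$, so no perturbation or leaf analysis is required---this is exactly the ``observing that $\det(\lambda{\bf I}_{4m}-\psi({\bf U}))$ is a polynomial'' step the paper invokes.
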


Finally, we discuss quaternionic quantum walks which satisfy the following 
condition on $q(e)$ : 
\begin{equation}\label{QuatCond}
\displaystyle \sum_{e:o(e)=u}q(e) \text{ does not depend on $u$.} 
\end{equation}
We have investigated this case in \cite{KMS2016}. 
In short, this condition says that the sum of the entries in each column of {\bf U} 
does not depend on the column. 
Quaternionic quantum walks need not satisfy (\ref{QuatCond}) in general, 
however, the Grover walk satisfies (\ref{QuatCond}) by definition. 
Hence we can consider this condition is inherited from the Grover walk.

Since $\psi$ is injective, ${}^T\!{\bf W}{\bf D}_w={\bf D}_w{}^T\!{\bf W}$ is equivalent to 
$\psi({}^T\!{\bf W})\psi({\bf D}_w)=\psi({\bf D}_w)\psi({}^T\!{\bf W})$. 
In this case, $\psi({}^T\!{\bf W})$ and $\psi({\bf D}_w)$ are simultaneously triangularizable. 

\begin{proposition}\label{PropCommutativity}
{\rm (\ref{QuatCond})} implies ${}^T\!{\bf W}{\bf D}_w={\bf D}_w{}^T\!{\bf W}$. Moreover, 
if $w(e){\;\neq\;}0$ for every $e{\;\in\;}D(G)$, then 
{\rm (\ref{QuatCond})} ${\Leftrightarrow\ }{}^T\!{\bf W}{\bf D}_w={\bf D}_w{}^T\!{\bf W}$.
\end{proposition}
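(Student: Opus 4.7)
The plan is a direct entrywise computation of the two $n \times n$ quaternionic matrix products, exploiting the fact that, for a unitary quaternionic quantum walk, Theorem \ref{UnitarityCondition} forces $w(e) = q(e)$ to depend only on $o(e)$. Introduce $\tilde w(u) := w(e)$ for any $e$ with $o(e) = u$ (well-defined by Theorem \ref{UnitarityCondition}); then ${\bf W}_{uv} = \tilde w(u)$ whenever $(u,v) \in D(G)$ and, by (\ref{DefWeightedDegreeMtx}), $({\bf D}_w)_{uu} = s_u := d_u\, \tilde w(u)$. In this notation, (\ref{QuatCond}) is exactly the assertion that $s_u = c$ for a single quaternion $c$ independent of $u$.

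For the forward direction, note that ${\bf D}_w$ is diagonal, so for any $(v,u) \in D(G)$,
\[
({}^T\!{\bf W}\, {\bf D}_w)_{uv} = \tilde w(v)\, s_v = d_v\, \tilde w(v)^2,
\qquad
({\bf D}_w\, {}^T\!{\bf W})_{uv} = s_u\, \tilde w(v) = d_u\, \tilde w(u)\, \tilde w(v),
\]
and both matrices vanish at entries not corresponding to arcs in $D(G)$. Under (\ref{QuatCond}) the left-hand side rewrites as $(d_v\, \tilde w(v))\,\tilde w(v) = c\, \tilde w(v)$ and the right-hand side as $(d_u\, \tilde w(u))\,\tilde w(v) = c\, \tilde w(v)$, which coincide. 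It is precisely here that the unitarity-driven reduction is essential: it collapses each matrix product into the form $(\text{constant})\cdot \tilde w(v)$, so (\ref{QuatCond}) simply identifies the two constants without ever requiring one $\tilde w$-value to commute with another.

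For the converse, add the hypothesis $w(e) \neq 0$ for every arc, i.e., $\tilde w(u) \neq 0$ and therefore invertible in $\HM$ for every vertex of the connected graph $G$. Entrywise commutativity at $(u,v)$ with $(v,u) \in D(G)$ reads $d_v\, \tilde w(v)^2 = d_u\, \tilde w(u)\, \tilde w(v)$; right-multiplying by $\tilde w(v)^{-1}$ yields $s_v = d_v\, \tilde w(v) = d_u\, \tilde w(u) = s_u$ for each adjacent pair of vertices. The connectedness of $G$ then propagates this equality along any path, so $s_u$ is constant across $V(G)$, which is (\ref{QuatCond}). The main technical point in both directions is keeping the order of quaternionic factors straight; once the reduction $w(e) = \tilde w(o(e))$ is in place, the remainder is just left/right multiplication by invertible scalars and the noncommutativity of $\HM$ never actually bites.
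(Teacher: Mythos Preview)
Your proof is correct and follows essentially the same route as the paper: both compute the $(u,v)$-entries of ${}^T\!{\bf W}{\bf D}_w$ and ${\bf D}_w{}^T\!{\bf W}$, use Theorem~\ref{UnitarityCondition} to collapse $\sum_{o(e)=u} w(e)$ to $d_u\,\tilde w(u)$, and then right-cancel $\tilde w(v)$ in the converse direction before invoking connectedness. The only difference is cosmetic---your notation $\tilde w(u)$ and $s_u$ makes the paper's computation slightly more transparent.
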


\begin{proof}
If $(v,u){\;\in\;}D(G)$, then by Theorem \ref{UnitarityCondition} we have 
\begin{equation*}
\begin{split}
({}^T\!{\bf W}{\bf D}_w)_{uv}&=w((v,u))\sum_{\substack{e{\in}D(G)\\o(e)=v}}w(e)
=d_vw((v,u))w((v,u)),\\
({\bf D}_w{}^T\!{\bf W})_{uv}&=\sum_{\substack{e{\in}D(G)\\o(e)=u}}w(e)w((v,u))
=d_uw((u,v))w((v,u)).
\end{split}
\end{equation*}
Therefore (\ref{QuatCond}) implies $d_vw((v,u))w((v,u))=d_uw((u,v))w((v,u))$ 
for all $(v,u){\;\in\;}D(G)$ and thereby 
${}^T\!{\bf W}{\bf D}_w={\bf D}_w{}^T\!{\bf W}$. 
If $w(e){\;\neq\;}0$ for every $e{\;\in\;}D(G)$, then $d_vw((v,u))w((v,u))=d_uw((u,v))w((v,u))$ 
implies $d_vw((v,u))=d_uw((u,v))$. 
Hence 
\[
\sum_{e{\in}D(G),o(e)=v}w(e)=\sum_{e{\in}D(G),o(e)=u}w(e) \text{ \ for } u,\,v{\;\in\;}V(G),
\] with 
$(v,u){\;\in\;}D(G)$. 
Since $G$ is connected, the equation just before holds for every pair of vertices 
and thereby (\ref{QuatCond}) holds. 
\end{proof}

Hence one can view 
Theorem \ref{GeneralEigenFormula} as a generalization of \cite{KMS2016}. 
By (\ref{QuatCond}), we may put $\alpha=\sum_{e:o(e)=u}q(e)$ independently of $u$. 
Then we immediately see that $q(e)={\alpha}/d_{o(e)}$ for every $e{\;\in\;}D(G)$. 
For $\alpha{\;\in\;}\mathbb{H}-\mathbb{R}$, it is known that 
there exist nonzero quaternions $h_{\pm}{\;\in\;}\HM^*$ such that 
$h_{\pm}^{-1}{\alpha}h_{\pm}=\alpha_{\pm}$ are complex numbers which are complex conjugate 
with each other. (For the details, see \cite{KMS2016})
Then we readily see 
\begin{equation*}
{\bf U}_{\pm}=h_{\pm}^{-1}{\bf U}h_{\pm}
=(({\bf U}_{\pm})_{ef})_{e,f{\in}D(G)}{\;\in\;}\Mat(2m,\CM),
\end{equation*}
where 
\begin{equation*}
({\bf U}_{\pm})_{ef} =\left\{
\begin{array}{ll}
\dfrac{\alpha_{\pm}}{d_{o(e)}} & \mbox{if $t(f)=o(e)$ and $f \neq e^{-1} $, } \\
\dfrac{\alpha_{\pm}}{d_{o(e)}} -1 & \mbox{if $f= e^{-1} $, } \\
0 & \mbox{otherwise.}
\end{array}
\right.
\end{equation*}
It follows that  
\begin{equation*}
\begin{split}
\det({\lambda}{\bf I}_{4m}-\psi({\bf U}))
&=\det(\psi(h_+{\bf I}_{2m})^{-1}({\lambda}{\bf I}_{4m}-\psi({\bf U}))\psi(h_+{\bf I}_{2m}))\\
&=\det({\lambda}{\bf I}_{4m}-\psi(h_+{\bf I}_{2m})^{-1}\psi({\bf U})\psi(h_+{\bf I}_{2m}))\\
&=\det({\lambda}{\bf I}_{4m}-\psi(h_+^{-1}{\bf U}h_+))\\
&=\det({\lambda}{\bf I}_{4m}-\psi({\bf U}_+))\\
&=\begin{vmatrix}{\lambda}{\bf I}_{2m}-{\bf U}_+&0\\0&{\lambda}{\bf I}_{2m}-{\bf U}_-\end{vmatrix}\\
&=\det({\lambda}{\bf I}_{2m}-{\bf U}_+)\det({\lambda}{\bf I}_{2m}-{\bf U}_-)\\
&=\det({\lambda}{\bf I}_{2m}-{\bf U}_+)\det({\lambda}{\bf I}_{2m}-\overline{{\bf U}_+}).
\end{split}
\end{equation*}
Therefore, we can calculate all right eigenvalues of 
${\bf U}$ by calculating eigenvalues of ${\bf U}_+$ since 
eigenvalues of ${\bf U}_-=\overline{{\bf U}_+}$ are complex conjugates of those of ${\bf U}_+$. 
Accordingly, ${\bf W}_{\pm}=h_{\pm}^{-1}{\bf W}h_{\pm}$ and 
${\bf B}_{w_{\pm}}=h_{\pm}^{-1}{\bf B}_{w}h_{\pm}
=({\bf B}^{(w_{\pm})})_{ef}$ are given by 
\begin{equation*}
({\bf W}_{\pm})_{uv} =\left\{
\begin{array}{ll}
\dfrac{{\alpha}_{\pm}}{d_u} & \mbox{if $(u,v){\;\in\;}D(G)$, } \\
0 & \mbox{otherwise,}
\end{array}
\right.
\quad 
{\bf B}^{(w_{\pm})}_{ef} =\left\{
\begin{array}{ll}
\dfrac{{\alpha}_{\pm}}{d_{o(f)}} & \mbox{if $t(e)=o(f)$, } \\
0 & \mbox{otherwise.}
\end{array}
\right.
\end{equation*}
In this case, we can apply (\ref{DetExpressionTrns}) to obtain 
the next theorem which is a special case of Theorem \ref{GeneralEigenFormula}. 

\begin{theorem}[Konno-Mitsuhashi-Sato \cite{KMS2016}]\label{EigenFormula}
$|{\Spec}(\psi({\bf U}))|=4m$. Suppose that (\ref{QuatCond}) holds. 
If $G$ is not a tree, then $4n$ of them are 
\begin{equation*}
\lambda = \dfrac{\mu_+ \pm \sqrt{\mu_+^2-4(\alpha_+-1)}}{2},\; 
\dfrac{\mu_- \pm \sqrt{\mu_-^2-4(\alpha_--1)}}{2},
\end{equation*}
where $\mu_{\pm}{\;\in\;}{\Spec}({}^T\!{\bf W}_{\pm})$. 
The remaining $4(m-n)$ are $\pm 1$ with equal multiplicities. 
If $G$ is a tree, then 
\begin{equation*}
\begin{split}
&{\Spec}(\psi ({\bf U}))\\
&=\Big{\{}\dfrac{\mu_+ \pm \sqrt{\mu_+^2-4(\alpha_+-1)}}{2},\;
\dfrac{\mu_- \pm \sqrt{\mu_-^2-4(\alpha_--1)}}{2} 
\bigl|\; \mu_{\pm} {\in} {\Spec}({}^T\!{\bf W}_{\pm})\Big{\}}
-\{1,1,-1,-1\}.
\end{split}
\end{equation*}
$\sigma_r({\bf U})=\bigcup_{\lambda{\in}{\Spec}({\bf U}_+)}\lambda^{\HM^*}$
\end{theorem}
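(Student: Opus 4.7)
The plan is to leverage the computation already carried out in the paragraphs preceding the theorem, which reduces the characteristic polynomial of $\psi({\bf U})$ to a product of characteristic polynomials of the two complex matrices ${\bf U}_+$ and ${\bf U}_-=\overline{{\bf U}_+}$. Since $\overline{{\bf U}_+}$ has the complex-conjugate spectrum of ${\bf U}_+$, it suffices to determine $\det(\lambda{\bf I}_{2m}-{\bf U}_+)$ and then double each root with its complex conjugate. This immediately accounts for $|{\Spec}(\psi({\bf U}))|=4m$ and reduces the proof to a purely complex-matrix spectral question.

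To compute $\Spec({\bf U}_+)$, I would identify ${\bf U}_+$ with the edge-like matrix of a \emph{complex} weighting. From the explicit formula for $({\bf U}_\pm)_{ef}$, setting $w_\pm(e)=\alpha_\pm/d_{o(e)}$ gives ${\bf U}_\pm={}^T\!{\bf B}_{w_\pm}-{\bf J}_0$ with the matrices ${\bf W}_\pm$ and ${\bf B}_{w_\pm}$ displayed just before the theorem. The determinant expression of Bass type for the second weighted zeta function (equation (\ref{DetExpressionTrns}) applied to $w_\pm$) then yields, after the substitution $t=1/\lambda$ and multiplication by $\lambda^{2m}$,
\begin{equation*}
\det(\lambda{\bf I}_{2m}-{\bf U}_\pm)
=(\lambda^2-1)^{m-n}\det\bigl(\lambda^2{\bf I}_n-\lambda\,{}^T\!{\bf W}_\pm+({\bf D}_{w_\pm}-{\bf I}_n)\bigr).
\end{equation*}

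The decisive simplification provided by condition (\ref{QuatCond}) is that $\sum_{e:o(e)=u}w_\pm(e)=\alpha_\pm$ independently of $u$, so ${\bf D}_{w_\pm}=\alpha_\pm{\bf I}_n$ is scalar. The $n\times n$ determinant on the right then becomes $\det(\lambda^2{\bf I}_n-\lambda\,{}^T\!{\bf W}_\pm+(\alpha_\pm-1){\bf I}_n)$, and upon triangularizing ${}^T\!{\bf W}_\pm$ over $\CM$ it factors as $\prod_{\mu_\pm\in\Spec({}^T\!{\bf W}_\pm)}(\lambda^2-\lambda\mu_\pm+\alpha_\pm-1)$. Solving each quadratic by the usual formula gives $\lambda=\tfrac{\mu_\pm\pm\sqrt{\mu_\pm^2-4(\alpha_\pm-1)}}{2}$, and the prefactor $(\lambda^2-1)^{m-n}$ contributes $m-n$ copies each of $+1$ and $-1$ from ${\bf U}_+$ (and the same from ${\bf U}_-$, since $\pm 1$ are real), producing the stated $4(m-n)$ eigenvalues $\pm 1$ with equal multiplicities in the non-tree case.

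For the tree case $m=n-1$, the exponent $m-n=-1$ means the $2n$ quadratic roots overcount the true spectrum by exactly two values of $+1$ and two of $-1$ (one pair coming from each of ${\bf U}_\pm$), explaining the subtraction of the multiset $\{1,1,-1,-1\}$. Finally, the statement about the right spectrum $\sigma_r({\bf U})$ is read off directly from Theorem \ref{QuatEigen}: the $4m$ complex eigenvalues of $\psi({\bf U})$ come in complex-conjugate pairs, so picking one representative per pair (equivalently, all of $\Spec({\bf U}_+)$) and closing up under quaternionic conjugacy gives $\sigma_r({\bf U})=\bigcup_{\lambda\in\Spec({\bf U}_+)}\lambda^{\HM^*}$. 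The only point requiring care is the bookkeeping in the tree case, where one must verify that it is indeed exactly one $+1$ and one $-1$ per block ${\bf U}_\pm$ that are spurious; this is the main (though mild) obstacle and is settled by comparing polynomial degrees on both sides of the determinant identity.
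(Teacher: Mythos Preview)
Your proposal is correct and follows essentially the same route the paper indicates: reduce $\det(\lambda{\bf I}_{4m}-\psi({\bf U}))$ to $\det(\lambda{\bf I}_{2m}-{\bf U}_+)\det(\lambda{\bf I}_{2m}-{\bf U}_-)$, identify ${\bf U}_\pm={}^T\!{\bf B}_{w_\pm}-{\bf J}_0$, and apply the complex Bass-type identity (\ref{DetExpressionTrns}) together with ${\bf D}_{w_\pm}=\alpha_\pm{\bf I}_n$. The paper states the theorem as a special case of Theorem \ref{GeneralEigenFormula} and gives only the sentence ``we can apply (\ref{DetExpressionTrns}) to obtain the next theorem,'' so your write-up is in fact a faithful unpacking of the paper's argument, including the tree-case bookkeeping.
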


We will show an example which does not satisfy (\ref{QuatCond}). 

\newpage
\begin{example}
$G=K_{1,3}$. Let $w(e_1)=1+i,\,w(e_2)=1-j,\,w(e_3)=2,\,
w(e_1^{-1})=w(e_2^{-1})=w(e_3^{-1})=0$. 
\begin{figure}[htbp] 
  \includegraphics[width=4.0cm]{./GW4.eps}
\end{figure}
\vspace{-5.0cm}
\begin{equation*}
\begin{split}
\hspace{4cm}
&{}^T\!{\bf W}=\bordermatrix{
 &v_1&v_2&v_3&v_4\cr
v_1&0&0&0&0 \cr
v_2&0&0&0&0 \cr
v_3&0&0&0&0 \cr
v_4&1+i&1-j&2&0
}, \\
&\psi({}^T\!{\bf W})=\begin{pmatrix}
0&0&0&0&0&0&0&0 \\
0&0&0&0&0&0&0&0 \\
0&0&0&0&0&0&0&0 \\
1+i&1&2&0&0&1&0&0 \\
0&0&0&0&0&0&0&0 \\
0&0&0&0&0&0&0&0 \\
0&0&0&0&0&0&0&0 \\
0&-1&0&0&1-i&1&2&0
\end{pmatrix}.
\end{split}
\end{equation*}
Then ${\bf U}$, ${\bf D}_w$ and $\psi({\bf D}_w)$ are given by 
\begin{equation*}
\begin{split}
&{\bf U}=\begin{pmatrix}
0&i&0&0&0&0\\
-1&0&0&0&0&0\\
0&0&0&-j&0&0\\
0&0&-1&0&0&0\\
0&0&0&0&0&1\\
0&0&0&0&-1&0\\
\end{pmatrix},
{\bf D}_w=\begin{pmatrix}
1+i&0&0&0 \\
0&1-j&0&0 \\
0&0&2&0 \\
0&0&0&0
\end{pmatrix},\\
&\psi({\bf D}_w)=\begin{pmatrix}
1+i&0&0&0&0&0&0&0 \\
0&1&0&0&0&1&0&0 \\
0&0&2&0&0&0&0&0 \\
0&0&0&0&0&0&0&0 \\
0&0&0&0&1-i&0&0&0 \\
0&-1&0&0&0&1&0&0 \\
0&0&0&0&0&0&2&0 \\
0&0&0&0&0&0&0&0
\end{pmatrix}.
\end{split}
\end{equation*}
Let ${\bf P}$ be defined by 
\[
{\bf P}=\begin{pmatrix}
1&0&0&0&0&0&0&0\\
0&1&1&0&0&0&0&0\\
0&0&0&0&1&0&0&0\\
0&0&0&0&0&0&1&0\\
0&0&0&1&0&0&0&0\\
0&i&-i&0&0&0&0&0\\
0&0&0&0&0&1&0&0\\
0&0&0&0&0&0&0&1
\end{pmatrix}.
\]
Then
\begin{equation*}
\begin{split}
{\bf P}^{-1}\psi({}^T\!{\bf W}){\bf P}&=
\begin{pmatrix}
0&0&0&0&0&0&0&0\\
0&0&0&0&0&0&0&0\\
0&0&0&0&0&0&0&0\\
0&0&0&0&0&0&0&0\\
0&0&0&0&0&0&0&0\\
0&0&0&0&0&0&0&0\\
1+i&1+i&1-i&0&2&0&0&0\\
0&-1+i&-1-i&1-i&0&2&0&0
\end{pmatrix},\\
{\bf P}^{-1}\psi({\bf D}_w){\bf P}&=
\begin{pmatrix}
1+i&0&0&0&0&0&0&0\\
0&1+i&0&0&0&0&0&0\\
0&0&1-i&0&0&0&0&0\\
0&0&0&1-i&0&0&0&0\\
0&0&0&0&2&0&0&0\\
0&0&0&0&0&2&0&0\\
0&0&0&0&0&0&0&0\\
0&0&0&0&0&0&0&0
\end{pmatrix}.
\end{split}
\end{equation*}
Hence it follows that 
$\Spec(\psi({}^T\!{\bf W}))=\{\mu_1,\mu_2,{\cdots},\mu_8\}
=\{0,0,0,0,0,0,0,0\}$, 
$\Spec(\psi({\bf D}_w))=\{\xi_1,\xi_2,{\cdots},\xi_8\}
=\{1+i,1+i,1-i,1-i,2,2,0,0\}$. 
Applying Theorem \ref{GeneralEigenFormula}, we obtain 
\begin{equation*}
\begin{split}
\Spec(\psi({\bf U}))&=\{{\pm}\sqrt{-i},{\pm}\sqrt{-i},{\pm}\sqrt{i},{\pm}\sqrt{i},
{\pm}\sqrt{-1},{\pm}\sqrt{-1},{\pm}1,{\pm}1\}-\{1,1,-1,-1\}\\
&=\{{\pm}\dfrac{1-i}{\sqrt{2}},{\pm}\dfrac{1-i}{\sqrt{2}},
{\pm}\dfrac{1+i}{\sqrt{2}},{\pm}\dfrac{1+i}{\sqrt{2}},
{\pm}i,{\pm}i\}.
\end{split}
\end{equation*}
Since $-i=j^{-1}ij{\;\in\;}i^{\HM^*}$, an eigenvalue and its complex conjugate belong to 
the same set of all quaternionic conjugations of the eigenvalue. 
Thus $\sigma_r({\bf U})=i^{\HM^*}{\cup}\Big{(}\dfrac{1+i}{\sqrt{2}}\Big{)}^{\!\HM^*}{\cup}
\Big{(}-\dfrac{1+i}{\sqrt{2}}\Big{)}^{\!\HM^*}$. 
\end{example}

\section*{Acknowledgments}

The first author is partially supported by the Grant-in-Aid for Scientific Research 
(Challenging Exploratory Research) of Japan Society for the Promotion of Science (Grant No. 15K13443).
The second author is partially supported by the Grant-in-Aid for Scientific Research 
(C) of Japan Society for the Promotion of Science (Grant No. 16K05249). 
The third author is partially supported by the Grant-in-Aid for Scientific Research 
(C) of Japan Society for the Promotion of Science (Grant No. 15K04985). 
We are grateful to K. Tamano, S. Matsutani and Y. Ide for some valuable comments on this work. 
%
%

\end{document}